\newcommand{\AX}[1]{\textnormal{\textbf{#1}}}
\definecolor{orange}{rgb}{1,0.5,0}
\spnewtheorem{fact}{Observation}{\bfseries}{\itshape}
\definecolor{orcidgreen}{RGB}{166,206,57}
\newcommand{\orcidicon}{\includegraphics[width=0.26cm]{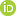}}
\def\orcidID#1{\renewcommand{\thefootnote}{\fnsymbol{footnote}}
	\unskip$^{\orcidicon}$\footnote{\orcidicon\color{orcidgreen}\,\footnotesize #1}\unskip
	\renewcommand{\thefootnote}{\arabic{footnote}}\unskip
}
\newcommand{\Hasse}[1][]{\mathfrak{H}\ifthenelse{\equal{#1}{}}{}{(#1)}}
\newcommand{\overlaps}{\between}
\DeclareMathOperator{\CC}{\mathtt{C}}
\DeclareMathOperator{\lca}{lca}
\DeclareMathOperator{\klca}{\textit{k}-lca}
\DeclareMathOperator{\2lca}{\textit{2}-lca}
\DeclareMathOperator{\LCA}{LCA}
\DeclareMathOperator{\Anc}{Anc}
\DeclareMathOperator{\cl}{cl}
\begin{document}
	
	\title{Unique Least Common Ancestors and Clusters in Directed Acyclic
		Graphs}
	
	\author{Ameera Vaheeda Shanavas\inst{1}\orcidID{0000-0003-0038-5492} \and
		Manoj Changat\inst{1}\orcidID{0000-0001-7257-6031} \and
		Marc Hellmuth\inst{2}\orcidID{0000-0002-1620-5508} \and 
		Peter F. Stadler\inst{2-6}\orcidID{0000-0002-5016-5191}
	}
	\authorrunning{A.\ Vaheeda Shanavas \emph{et al.}}
	\titlerunning{Last Common Ancestors and Clusters in DAGs}
	
	\institute{Department of Futures Studies, University of Kerala, Trivandrum
		695 581, India \email{ameerasv@gmail.com, mchangat@keralauniversity.ac.in}
		\and
		Dept.\ Mathematics, Faculty of Science,
		Stockholm University, SE-10691 Stockholm, Sweden.
		\email{marc.hellmuth@math.su.se},
		\and
		Bioinformatics Group, Department of Computer Science and Interdisciplinary Center for Bioinformatics, Universit{\"a}t Leipzig,
		H{\"a}rtelstrasse 16-18, D-04107 Leipzig, Germany,
		\email{studla@bioinf.uni-leipzig.de}
		\and
		Max Planck Institute for Mathematics in the Sciences, Leipzig, Germany \and
		Institute for Theoretical Chemistry,
		University of Vienna, Vienna, Austria \and
		Faculty of Ciencias, Universidad Nacional de Colombia, Bogot{\'a},
		Colombia
		\and 
		Santa Fe Institute, Santa Fe, NM }
	
	\maketitle

	\begin{abstract}
		We investigate the connections between clusters and least common
		ancestors (LCAs) in directed acyclic graphs (DAGs). We focus on the class
		of DAGs having unique least common ancestors for certain subsets of their
		minimal elements since these are of interest, particularly as models of
		phylogenetic networks. Here, we use the close connection between the
		canonical $k$-ary transit function and the closure function on a set
		system to show that pre-$k$-ary clustering systems are exactly those that
		derive from a class of DAGs with unique LCAs. Moreover, we show that $k$-ary $\mathscr{T}$-systems and $k$-weak hierarchies are associated 
		with DAGs that satisfy stronger conditions on
		the existence of unique LCAs for sets of size at most $k$.
		
		\keywords{Monotone transit function; closure function; clustering system;
			$k$-weak hierarchy}
	\end{abstract}
	
	\sloppy  
	
	\section{Introduction} 
	
	Directed acyclic graphs (DAGs) play an increasing role in mathematical
	phylogenetics as models of more complex evolutionary relationships that are
	not adequately represented by rooted trees. The set $X$ of minimal vertices
	of a DAG $G=(V,E)$ corresponds to the extant taxa and thus generalizes the
	leaf set of a phylogenetic tree. Inner vertices $u\in V$ are interpreted as
	ancestral states and are naturally associated with the sets $\CC(u)$ of
	the descendant genes. These sets are often called the ``hardwired
	clusters'' \cite{Nakhleh:05,Huson:11}. A least common ancestor of a set $A$
	of taxa is a minimal vertex $v$ in $G$ such that $A\subseteq\CC(v)$, i.e.,
	all taxa in $A$ are descendants of $v$. In phylogenetics, least common
	ancestors play a key role in understanding evolutionary relationships and
	processes. The clusters of $G$, on the other hand, are often accessible
	from data. Basic relations between clustering systems of (rooted) DAGs and
	the uniqueness of least common ancestors were explored recently in
	\cite{Hellmuth:22q}. Here, we elaborate further on this theme, making
	use in particular of the fact that the canonical transit function of set
	systems is a restriction of the closure function to small spanning sets.
	
	Section 2 contains basic definitions, some useful properties of DAGs, and a characterization of $k$-weak hierarchies.  Section 3 is about $\lca$ and $\klca$-property, and the connection of the latter with the pre-$k$-ary clustering systems. Section 4 discusses the correspondence of the strict and the strong $\klca$ properties to the $k$-ary $\mathscr{T}$-systems and the $k$-weak hierarchies, respectively.
	
	\section{Background and Preliminaries} 
	
	\paragraph{\textbf{Transit functions and $\boldsymbol{k}$-ary transit
			functions.}}  Let $X$ be a non-empty, finite set. We write $X^k$ for
	the $k$-fold Cartesian set product of $X$ and $X^{(k)}$ for the set of all
	non-empty subsets of $X$ with cardinality at most $k$.
	
	Following \cite{Changat:10}, a \emph{$k$-ary transit function} on $X$ is a
	function $R: X^k \mapsto 2^X$ satisfying the axioms
	\begin{description}[noitemsep,nolistsep] 
		\item[\AX{(t1)}] $u_1 \in R(u_1,u_2,\ldots, u_k)$;
		\item[\AX{(t2)}] $R(u_1,u_2,\ldots ,u_k)=R(\pi(u_1,u_2,\ldots, u_k))$ 
		for all $u_i\in X$ and all permutations $\pi$ of $(u_1,u_2,\ldots,u_k)$;
		\item[\AX{(t3)}] $R(u,u,\ldots,u)=\{u\}$ for all $u\in X$.
	\end{description}
	The ``symmetry'' axiom \AX{(t2)} allows us to interpret a $k$-ary transit
	function also as a function over subsets $U\in X^{(k)}$. Then axiom \AX{(t2)}
	becomes void, \AX{(t3)} becomes $R(\{x\})=\{x\}$ for all $x\in X$, and
	condition \AX{(t1)} reads ``$u\in U$ implies $u\in R(U)$ for all $U\in
	X^{(k)}$''.
	
	Given a $k$-ary transit function $R$ on $X$, we denote its system of
	\emph{transit sets} by $\mathscr{C}_R\coloneqq 
	\{R(U)\mid U\in X^{(k)}\}$. 
	A set system $\mathscr{C}\subset 2^X$ \emph{is identified} by a $k$-ary transit function if $\mathscr{C}=\mathscr{C}_{R_{\mathscr{C}}}$ where $R_{\mathscr{C}}: X^{(k)}\to 2^X$ defined by $R_{\mathscr{C}}(U) \coloneqq \bigcap\{ C\in\mathscr{C}\mid U\subseteq C\}$
	for all $U\in X^{(k)}$. As shown in \cite{Barthelemy:08,Changat:19a},
	a system of non-empty sets $\mathscr{C}\subset 2^X$ is identified by a
	$k$-ary transit function, $k\ge2$, if and only if $\mathscr{C}$ is a \emph{($k$-ary) $\mathscr{T}$-system}, satisfying the following three axiom
	\begin{description}[noitemsep,nolistsep] 
		\item[\AX{(KS)}] $\{x\}\in\mathscr{C}$ for all $x\in X$
		\item[\AX{(KR)}] For all $C\in\mathscr{C}$ there is a set $T\subseteq
		C$ with $|T|\le k$ such that $T\subseteq C'$ implies $C\subseteq C'$ for
		all $C'\in\mathscr{C}$. 
		\item[\AX{(KC)}] For every $U\subseteq X$ with $|U|\le k$ holds
		$\displaystyle\bigcap\{ C\in\mathscr{C}\mid U\subseteq C\} \in
		\mathscr{C}$.
	\end{description}
	Conversely, a $k$-ary transit function $R$
	identifies a set system if and only if it satisfies the \emph{monotone} axiom
	\begin{description}[noitemsep,nolistsep] 
		\item[\AX{(m)}] For every $w_1,\dots,w_k \in R(u_1,\dots,u_k)$ holds
		$R(w_1,\dots,w_k)\subseteq R(u_1,\dots,u_k)$.
	\end{description}
	That is, for all $U,W\in X^{(k)}$ holds: $W\subseteq R(U)$ implies $R(W)\subseteq R(U)$. 
	The correspondence of monotone $k$-ary transit functions and $k$-ary
	$\mathscr{T}$-systems is then mediated by the \emph{canonical transit function} $R_{\mathscr{C}}$ and $\mathscr{C}_R$, respectively.
	
	For a general set system $\mathscr{C}$ on $X$, the \emph{closure} function 
	$\cl:2^X \to 2^X$, sometimes also called the ``convex hull'', is defined as
	$\cl(A) \coloneqq \bigcap \{C \in\mathscr{C} \mid A\subseteq C \}$ for all
	$A\in 2^X$.
	The canonical $k$-ary transit
	function $R_{\mathscr{C}}$ of a set system is the restriction of its closure
	function to small sets as arguments: $R_{\mathscr{C}}(U)=\cl(U)$ for all
	non-empty sets $U$ with $|U|\le k$.
	
	A set system is \emph{closed} if for all non-empty set $A\in 2^X$ holds
	$A\in\mathscr{C}\iff \cl(A)=A$. By \cite[L.\ 16]{Hellmuth:22q}, this is
	equivalent to the condition that for all $A,B\in\mathscr{C}$ with $A\cap
	B\ne\emptyset$ we have $A\cap B\in \mathscr{C}$, i.e., $\mathscr{C}$ is
	closed under pairwise intersection.  A set system $\mathscr{C}$
	consisting of \emph{non-empty} subsets of $X$ is called a \emph{clustering
		system} if it satisfies \AX{(KS)} and \AX{(K1)}: $X\in\mathscr{C}$. Note
	that axiom \AX{(KS)} translates to $R_{\mathscr{C}}$ satisfying \AX{(t3)}.
	A $k$-ary $\mathscr{T}$-system is thus a clustering system if and only it
	satisfies \AX{(K1)} or, equivalently \cite{Changat:19a}, if its canonical
	transit function $R=R_{\mathscr{C}}$ satisfies
	\begin{description}[noitemsep,nolistsep] 
		\item[\AX{(a')}] there is $U\in X^{(k)}$ such that $R(U)=X$.
	\end{description}
	
	A set system is called \emph{pre-$k$-ary} if it satisfies \AX{(KC)} for a
	given parameter $k$. The $2$-ary case has received considerable attention
	in the literature for the special case of clustering systems, see \cite{Barthelemy:08}.  A $2$-ary transit function is called
	a \emph{transit function}. 
	A clustering system $\mathscr{C}$ is
	called pre-binary in \cite{Barthelemy:08} if \AX{(KC)} with $k=2$ is
	satisfied, i.e., if $R_{\mathscr{C}}(x,y)\in\mathscr{C}$ for all $x,y\in
	X$, and \emph{binary} if in addition \AX{(KR)} holds with $k=2$.  Binary
	clustering systems are therefore identified by monotone ($2$-ary) transit
	function satisfying \AX{(a')} with $k=2$; that is, there is $p,q\in X$ such
	that $R(p,q)=X$.
	
	\paragraph{\textbf{Weak and $\boldsymbol{k}$-weak hierarchies.}}
	Generalizations of hierarchies are important in the clustering
	literature. Recall that a clustering system $\mathscr{C}$ is a
	\begin{description}[noitemsep,nolistsep]
		\item[\emph{weak hierarchy}] if for any three sets $A,B,C\in\mathscr{C}$
		holds $A \cap B \cap C \in \{ A\cap B, A\cap C, B\cap C\}$
		\cite{Bandelt:89};
		\item[\emph{$\boldsymbol{k}$-weak hierarchy}] if for any $k+1$ sets
		$A_1,A_2,\dots,A_{k+1}\in\mathscr{C}$ there is $1\leq j\leq k+1$ such
		that $\displaystyle
		\bigcap_{i=1}^{k+1} A_i = \bigcap_{i=1,i\neq j}^{k+1} A_i$
		\cite{Bandelt:94}.
	\end{description}
	We write $A\overlaps B$ if $A\cap B\notin\{A,B,\emptyset\}$ and say that
	$A$ and $B$ overlap. It is well known that
	weak hierarchy = $2$-weak hierarchy $\implies$
	$k$-weak hierarchy $\implies$ $(k+1)$-weak hierarchy for all $k\ge 3$. As
	outlined in \cite{DRESS:96}, weak hierarchies always satisfy \AX{(KR)} for
	$k=2$. More generally, Lemma~6.3 of \cite{Changat:19a} ensures that
	$k$-weak hierarchies satisfy \AX{(KR)} for the parameter $k$.  For weak
	hierarchies, furthermore, axiom \AX{(KC)} with $k=2$ is equivalent to
	$\mathscr{C}$ being closed under pairwise intersection.
	
	The characterization of $k$-weak hierarchies by condition \AX{(kW')} in
	\cite{Bertrand:17}, together with the fact that every $k$-weak hierarchy is
	also a $k'$-weak hierarchy for all $k'\ge k$, can be rephrased as follows:
	\begin{fact}
		\label{obs:z}
		A set system $\mathscr{C}$ is a $k$-weak hierarchy if and only if for
		every $A\in 2^X$ with $|A|>k$ there is $z\in A$ such that
		$z\in\cl(A\setminus\{z\})$.
	\end{fact}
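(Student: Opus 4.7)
The plan is to prove both directions by contradiction, exploiting the natural duality between elements that fail the closure condition and clusters witnessing this failure.

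For the forward direction, I assume $\mathscr{C}$ is a $k$-weak hierarchy and take $A\in 2^X$ with $|A|>k$. Suppose for contradiction that no $z\in A$ lies in $\cl(A\setminus\{z\})$. Enumerating $A=\{a_1,\dots,a_m\}$ with $m\ge k+1$, the failure of $a_i\in\cl(A\setminus\{a_i\})$ means, by the very definition of closure as an intersection of clusters, that there is some $C_i\in\mathscr{C}$ with $A\setminus\{a_i\}\subseteq C_i$ but $a_i\notin C_i$. Applying the $k$-weak hierarchy property to the $k+1$ clusters $C_1,\dots,C_{k+1}$ yields an index $j$ with $\bigcap_{i=1}^{k+1}C_i=\bigcap_{i\ne j}C_i$. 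Since $a_j\in A\setminus\{a_i\}\subseteq C_i$ for every $i\ne j$ but $a_j\notin C_j$, the element $a_j$ belongs to the right-hand side but not the left, a contradiction.

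For the reverse direction, I assume the closure condition and take $A_1,\dots,A_{k+1}\in\mathscr{C}$. If the $k$-weak hierarchy condition fails on this tuple, then for every $j$ I can pick $a_j\in\bigcap_{i\ne j}A_i\setminus\bigcap_i A_i$, so that $a_j\in A_i$ whenever $i\ne j$ while $a_j\notin A_j$. The small but crucial observation is that these $k+1$ elements are pairwise distinct: if $a_j=a_{j'}$ for some $j\ne j'$, then $a_{j'}=a_j\in A_{j'}$ (since $j\ne j'$), contradicting $a_{j'}\notin A_{j'}$. Hence $A:=\{a_1,\dots,a_{k+1}\}$ has size $k+1>k$, and the hypothesis delivers some $j$ with $a_j\in\cl(A\setminus\{a_j\})$. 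Because $A\setminus\{a_j\}\subseteq A_j$ and $A_j\in\mathscr{C}$, the closure $\cl(A\setminus\{a_j\})$ is contained in $A_j$, forcing $a_j\in A_j$, the desired contradiction.

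I do not expect a serious obstacle: both halves are essentially bookkeeping once one recognises the dual construction that converts $k+1$ closure-obstructing clusters into a contradiction under $k$-weak hierarchy, and conversely converts $k+1$ offending clusters into a $(k+1)$-element set $A$ violating the closure condition. The only place that needs care is the distinctness argument in the reverse direction, which is what aligns the $k+1$ clusters with the $k+1$ elements needed to invoke the hypothesis.
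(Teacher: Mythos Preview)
Your proof is correct in both directions; the forward half uses the standard trick of extracting one witnessing cluster per element to force a contradiction with the $k$-weak hierarchy identity, and in the reverse half the distinctness argument for the $a_j$ is exactly the point that needs checking and you handle it cleanly.

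The paper, however, does \emph{not} give its own proof of this observation. It states the result as a rephrasing of condition \AX{(kW')} from \cite{Bertrand:17}, combined with the fact that every $k$-weak hierarchy is also a $k'$-weak hierarchy for $k'\ge k$, and then immediately moves on to Proposition~\ref{prop:kwH}. So there is no ``paper's proof'' to compare against: the paper treats this as a known characterization imported from the literature, whereas you supply a direct, elementary, self-contained argument. Your approach buys independence from the external reference and makes the equivalence transparent at the level of elements and clusters; the paper's approach keeps the exposition short by delegating to \cite{Bertrand:17}.
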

	For our purposes, the following characterization of $k$-weak hierarchies in
	terms of their closure functions will be particularly useful:
	\begin{proposition}
		\label{prop:kwH}
		A set system $\mathscr{C}$ on $X$ is a $k$-weak hierarchy if and only if
		for every $\emptyset \neq A\subseteq X$ there exists $U\subseteq A$
		with $|U|\leq k$ such that $\cl(A)=\cl(U)$.
	\end{proposition}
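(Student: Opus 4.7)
The plan is to pivot on Observation~\ref{obs:z}, which already characterises $k$-weak hierarchies in the ``single element'' form $z\in\cl(A\setminus\{z\})$. The bridge between that form and the cardinality-$k$ spanning set in the proposition is one elementary closure lemma that I would isolate at the start: \emph{if $z\in\cl(A\setminus\{z\})$, then $\cl(A)=\cl(A\setminus\{z\})$}. This is immediate from the definition $\cl(B)=\bigcap\{C\in\mathscr{C}\mid B\subseteq C\}$: any $C\in\mathscr{C}$ with $A\setminus\{z\}\subseteq C$ contains $\cl(A\setminus\{z\})$, hence contains $z$, hence contains $A$; so the two collections of sets being intersected coincide. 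I would also record the trivial monotonicity $B\subseteq A\Rightarrow \cl(B)\subseteq\cl(A)$ and extensivity $A\subseteq\cl(A)$, since both are used tacitly below.

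For the forward direction, assume $\mathscr{C}$ is a $k$-weak hierarchy and let $\emptyset\neq A\subseteq X$. If $|A|\le k$, take $U=A$. Otherwise $|A|>k$, so Observation~\ref{obs:z} supplies $z\in A$ with $z\in\cl(A\setminus\{z\})$; the preliminary lemma replaces $A$ by $A\setminus\{z\}$ without changing the closure. Iterate: at each step the cardinality strictly drops while the closure is preserved, so after finitely many steps I reach a subset $U\subseteq A$ with $|U|\le k$ and $\cl(U)=\cl(A)$.

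For the reverse direction, assume the spanning condition and fix $A\subseteq X$ with $|A|>k$; I aim to verify the hypothesis of Observation~\ref{obs:z}. Choose $U\subseteq A$ with $|U|\le k<|A|$ and $\cl(A)=\cl(U)$, and pick any $z\in A\setminus U$, which exists by cardinalities. From $U\subseteq A\setminus\{z\}\subseteq A$ and monotonicity, $\cl(U)\subseteq\cl(A\setminus\{z\})\subseteq\cl(A)=\cl(U)$, so equality holds throughout. Then $z\in A\subseteq\cl(A)=\cl(A\setminus\{z\})$, giving the required witness; by Observation~\ref{obs:z}, $\mathscr{C}$ is a $k$-weak hierarchy.

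The argument is almost mechanical once the closure-stability lemma is cleanly separated, and I do not anticipate a genuine obstacle; the only point warranting care is the unambiguous use of $\cl$ as an intersection operator (in particular, that $\cl(A)$ need not itself lie in $\mathscr{C}$), which is precisely what makes the replacement lemma work by the ``same family of containing clusters'' argument rather than by an appeal to idempotence.
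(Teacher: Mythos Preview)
Your argument is correct in both directions. The forward implication is essentially the paper's: both invoke Observation~\ref{obs:z} on a set with $|A|>k$, deduce $\cl(A)=\cl(A\setminus\{z\})$, and iterate down to cardinality $k$. The paper phrases the closure-stability step via isotony and idempotency of $\cl$, whereas you verify it directly by showing the two families of containing clusters coincide; these are the same fact in different clothing. (Incidentally, your closing caveat is unnecessary: $\cl$ as defined by intersection is always idempotent, regardless of whether $\cl(A)\in\mathscr{C}$, so the paper's appeal to idempotency is perfectly safe.)

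The backward direction is where you genuinely diverge. The paper does \emph{not} reuse Observation~\ref{obs:z}; instead it argues the contrapositive from the defining $(k{+}1)$-set condition: assuming $\mathscr{C}$ is not $k$-weak, it picks sets $A_1,\dots,A_{k+1}\in\mathscr{C}$ whose full intersection is strictly smaller than every $k$-fold sub-intersection, extracts elements $x_i\in\bigcap_{j\ne i}A_j\setminus A_i$, sets $A=\{x_1,\dots,x_{k+1}\}$, and shows that every $U\subseteq A$ with $|U|\le k$ lies inside some $A_h$ missing $x_h$, so $\cl(U)\ne\cl(A)$. Your route is shorter and more symmetric---you pivot on Observation~\ref{obs:z} both ways and need only monotonicity plus extensivity---while the paper's route is more self-contained, since it exhibits an explicit obstructing set $A$ directly from the failure of the $k$-weak-hierarchy axiom without passing through the auxiliary characterisation.
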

	\begin{proof}
		First, assume that $\mathscr{C}$ is a $k$-weak hierarchy. If $|A|\le k$,
		then $A=U$ trivially satisfies $\cl(A)=\cl(U)$.  Hence, assume $|A|>k$. By
		Obs.~\ref{obs:z}, there is $z\in A$ such that $z\in\cl(A\setminus\{z\})$,
		which implies $A\subseteq\cl(A\setminus\{z\})$. Together with isotony and
		idempotency of the closure function, we obtain
		\begin{equation*}
			\cl(A\setminus\{z\})\subseteq \cl(A)\subseteq
			\cl(\cl(A\setminus\{z\})) =
			\cl(A\setminus\{z\}).
		\end{equation*}
		Thus, there is $z\in A$ such that $\cl(A)=\cl(A\setminus\{z\})$.
		Repeating this argument for $A'\coloneqq A\setminus\{z\}$, we observe
		that we can stepwisely remove elements of $A$ while preserving $\cl(A)$
		until we arrive a residual set $U\subset A$ with $|U|=k$ that still
		satisfies $\cl(U)=\cl(A)$. 
		
		Now assume that $\mathscr{C}$ is \emph{not} a $k$-weak hierarchy.  Hence,
		there are $k+1$ sets $A_1,A_2,\dots,A_{k+1}\in\mathscr{C}$ such that for
		all $1\leq j\leq k+1$ it holds that $\cap_{i=1}^{k+1} A_i \subsetneq
		\cap_{i=1,i\neq j}^{k+1} A_i$.  Thus, there are $k+1$ (distinct) elements
		$x_1,\dots,x_{k+1}\in X$ such that $x_i\in A_j$ if and only $i\ne j$.
		Set $A=\{x_1,x_2,\dots,x_{k+1}\}$ and consider any subset $U\subset A$ with
		$|U|\leq k$.  Then there is at least one set $A_h$, $1\le h\le k+1$, such
		that $U\subseteq A_h$. By the previous arguments, $x_h\notin A_h$.  Since
		$A_h\in\mathscr{C}$, we have $\cl(U)\subseteq A_h$, and thus
		$x_h\notin\cl(U)$. Since $x_h\in A$ and $A\subseteq\cl(A)$, we have
		$x_h\in \cl(A)$ and, thus, $\cl(U)\ne\cl(A)$.  \qed
	\end{proof}
	
	\paragraph{\textbf{Clusters, $\boldsymbol{\LCA}$ and $\boldsymbol{\lca}$
			in DAGs.}}  Let $G$ be a directed acyclic graph (DAG) with an associated
	partial order $\preceq$ on its vertex set $V(G)$ defined by $v\preceq_G w$ if and only if there is a
	directed path from $w$ to $v$.  In this case, we say that $w$ is an
	ancestor of $v$ and $v$ is a descendant of $w$. If the context is clear, we
	may drop the subscript and write $\preceq$. Two vertices $u,v\in
	V(G)$ are \emph{incomparable} if neither $u\preceq v$ nor $v\preceq u$ is
	true.  We denote by $X=L(G)\subseteq V(G)$ the $\preceq$-minimal vertices
	of $G$ and we call $x\in X$ a leaf of $G$.  For every $v\in V(G)$, the set
	of its descendant leaves
	\begin{equation}
		\CC(v)\coloneqq\{ x\in X\mid x \preceq v\}
	\end{equation}
	is a cluster of $G$. We write $\mathscr{C}_G\coloneqq\{\CC(v)\mid v\in V(G)\}$.  By construction $\CC(x)=\{x\}$ for $x\in X$,  hence $\mathscr{C}_G$ satisfies \AX{(KS)}. For $v\in V(G)$, we write
	$\Anc(v) = \{w\in V(G) \mid v\preceq w\}$ for the ancestors of $v$.  For every
	leaf $x\in X$ we have $\Anc(x)=\{v \mid x\in \CC(v)\}$.  We write
	$\Anc(Y)\coloneqq \bigcap_{w\in Y} \Anc(w)$ for the set of common ancestors
	of all $w\in Y$.  In general, not every set $Y\subseteq V$ has a common
	ancestor in a DAG: Consider the DAG with three leaves $\{x,y,z\}$, two
	maximal vertices $\{p,q\}$, $\CC(p)=\{x,y\}$, and $\CC(q)=\{x,z\}$.  Then
	$\Anc(\{x,z\})=\emptyset$. A \emph{(rooted) network} $G$ is a DAG such that
	there is a unique vertex $\rho\in V(G)$, called the root, with indegree
	0. In a network, we have $x\preceq \rho$ for all $x\in V(G)$ and, thus, in
	particular, $\CC(\rho)=X$, i.e., $X\in \mathscr{C}_G$, and thus
	$\mathscr{C}_G$, satisfying \AX{(K1)}, is a clustering system.
	
	\begin{definition} \cite{Bender:01}\
		\label{def:lca}
		If $G$ is a DAG then $w$ is a \emph{least common ancestor} (LCA) of
		$Y\subseteq V(G)$ if it is a $\preceq$-minimal element in $\Anc(Y)$.  The
		set $\LCA(Y)$ comprises all LCAs of $Y$ in $G$.
	\end{definition}
	An LCA of $Y$ thus is an ancestor of all vertices in $Y$ that is
	$\preceq$-minimal w.r.t.\ this property. Clearly, $\LCA(\{v\})=\{v\}$ for
	all $v\in V(G)$ and $\LCA(Y)=\emptyset$ if and only if $\Anc(Y)=\emptyset$.
	In a network, the root vertex is a common ancestor for any set of vertices,
	and thus $\LCA(Y)\ne\emptyset$.
	
	We will, in particular, be interested in situations where the LCA of certain
	sets of leaves is uniquely defined. More precisely, we are interested in
	DAGs where $|\LCA(Y)|=1$ holds for certain subsets $Y\subseteq X$; the most
	obvious examples are DAGs that satisfy the $\2lca$-property (also known as
	the \emph{pairwise lca-property} \cite{Hellmuth:22q}), i.e., for every pair
	of leaves $x,y\in L(G)$ there is a unique least common ancestor
	$\lca(x,y)$. For simplicity, we will write $\lca(Y)=q$ instead of
	$\LCA(Y)=\{q\}$ whenever $|\LCA(Y)|=1$ and say that \emph{$\lca(Y)$ is
		defined}; otherwise, we leave $\lca(Y)$ \emph{undefined}.
	
	The following result for networks \cite[L.\ 17]{Hellmuth:22q} remains
	valid for all DAGs.
	\begin{lemma}
		\label{L3.33}
		Let $G$ be a DAG. Then $v\preceq_G w$ implies $\CC(v)\subseteq\CC(w)$ for
		all $v,w\in V(G)$.
	\end{lemma}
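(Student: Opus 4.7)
The plan is to argue directly from the definitions, exploiting the fact that $\preceq_G$ is a genuine partial order on $V(G)$, i.e., that it is transitive. Transitivity of $\preceq_G$ is immediate: if $v\preceq_G w$ and $x\preceq_G v$, then by definition there is a directed path $P_1$ from $w$ to $v$ and a directed path $P_2$ from $v$ to $x$; concatenating $P_1$ and $P_2$ yields a directed path from $w$ to $x$, so $x\preceq_G w$.

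Given this, the lemma reduces to a one-line unfolding. First I would fix an arbitrary $x\in\CC(v)$. By the definition $\CC(v)=\{x\in X\mid x\preceq_G v\}$ recalled just before the lemma, this means $x\in X$ and $x\preceq_G v$. Combining with the hypothesis $v\preceq_G w$ and the transitivity established above, I obtain $x\preceq_G w$. Since $x\in X$ and $x\preceq_G w$, the definition of $\CC(w)$ gives $x\in\CC(w)$. As $x$ was arbitrary, $\CC(v)\subseteq\CC(w)$.

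There is no real obstacle here; the statement is essentially a sanity check that the cluster map $\CC$ is order-preserving with respect to $\preceq_G$. The only thing worth being careful about is not to conflate the two ``descendant'' notions at play: $\CC(v)$ consists only of \emph{leaves} below $v$, not of all descendants, but since $x\in\CC(v)$ already forces $x\in X$, the transitivity step preserves membership in $X$ automatically. Thus no separate argument about leaves is needed beyond noting that $X$ is fixed throughout.
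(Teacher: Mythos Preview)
Your proof is correct. The paper itself does not spell out an argument for this lemma at all; it simply remarks that the corresponding result for networks (Lemma~17 in \cite{Hellmuth:22q}) remains valid for arbitrary DAGs. Your direct verification via transitivity of $\preceq_G$ is exactly the natural argument one would give to justify that remark, so there is nothing to add or compare.
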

	Consequently, \cite[Obs.\ 12 \& 13]{Hellmuth:22q} also hold for DAGs in
	general:
	\begin{fact} 
		\label{obs:deflcaY}
		Let $G$ be a DAG with leaf set $X$, $\emptyset\ne A\subseteq X$, and
		suppose $\lca(A)$ is defined. Then the following is satisfied:
		\begin{enumerate}[noitemsep,nolistsep]
			\item[(i)] $\lca(A)\preceq_{G} v$ for all $v$ with $A\subseteq\CC(v)$.
			\item[(ii)] $\CC(\lca(A))$ is the unique inclusion-minimal cluster in
			$\mathscr{C}_G$ containing $A$.
			\item[(iii)] $\lca(\CC(\lca(A)))=\lca(A)$.
		\end{enumerate}
	\end{fact}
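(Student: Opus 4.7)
The plan is to derive (i) first as the essential content, and then deduce (ii) and (iii) from it together with Lemma~\ref{L3.33} and the defining properties of $\CC$ and $\Anc$, mirroring the network argument of \cite{Hellmuth:22q} but without relying on the existence of a root.

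For (i), I would fix $v$ with $A\subseteq\CC(v)$ and note that every $a\in A$ satisfies $a\preceq v$, so $v\in\Anc(A)$. Because $V(G)$ is finite, the subposet $(\Anc(A),\preceq)$ contains a $\preceq$-minimal element $q'$ with $q'\preceq v$. By definition $q'\in\LCA(A)$, and the hypothesis $|\LCA(A)|=1$ forces $q'=\lca(A)$. Hence $\lca(A)\preceq v$. The uniqueness hypothesis is used precisely here: without it, $v$ could dominate some minimal ancestor of $A$ other than $\lca(A)$ and the conclusion would fail.

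For (ii), let $C\in\mathscr{C}_G$ with $A\subseteq C$ and pick $v$ with $C=\CC(v)$. Part (i) gives $\lca(A)\preceq v$, and Lemma~\ref{L3.33} then yields $\CC(\lca(A))\subseteq\CC(v)=C$. Since $\lca(A)\in\Anc(A)$ implies $A\subseteq\CC(\lca(A))$, and $\CC(\lca(A))\in\mathscr{C}_G$ by definition of $\mathscr{C}_G$, the cluster $\CC(\lca(A))$ is inclusion-minimal among clusters containing $A$; uniqueness follows immediately from the inclusion just established.

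For (iii), set $q\coloneqq\lca(A)$ and $B\coloneqq\CC(q)$. Then $q\in\Anc(B)$ by definition of $\CC$, so $\LCA(B)\ne\emptyset$. For any $q'\in\LCA(B)$, the inclusion $A\subseteq B$ implies $\Anc(B)\subseteq\Anc(A)$, hence $q'\in\Anc(A)$; part (i) then forces $q\preceq q'$. Combined with $q\in\Anc(B)$ and the $\preceq$-minimality of $q'$ in $\Anc(B)$, this yields $q=q'$. Thus $\LCA(B)=\{q\}$, i.e., $\lca(\CC(\lca(A)))=\lca(A)$. I do not anticipate a real obstacle; the only step that genuinely uses the hypothesis rather than definitional bookkeeping is (i), and once it is established (ii) and (iii) are short.
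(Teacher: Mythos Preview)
Your proof is correct and is essentially a careful spelling-out of what the paper leaves implicit: the paper does not give a standalone argument here but simply records that \cite[Obs.~12 \& 13]{Hellmuth:22q}, originally stated for networks, carry over to arbitrary DAGs once Lemma~\ref{L3.33} is available, and your argument is exactly the kind of verification that justifies this claim. In particular, your use of finiteness in (i) to find a minimal ancestor below $v$ is the only place where anything beyond the definitions is needed, and it is precisely the step where the root of a network would otherwise have been invoked; the rest is, as you say, bookkeeping via Lemma~\ref{L3.33}.
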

	Note that the existence of $\lca(A)$ for all $A\subseteq X$ does not
	imply that $G$ is a network since we could expand any network ``upward''
	for $\rho$ by attaching an arbitrary DAG that has $\rho$ as its unique
	leaf. Clearly, the vertices ``above'' $\rho$ cannot be least common
	ancestors of any leaves.
	
	Consider a set system $\mathscr{Q}\subseteq 2^X$. Then the Hasse diagram
	$\Hasse(\mathscr{Q})$ is the DAG with vertex set $\mathscr{Q}$ and directed
	edges from $A\in\mathscr{Q}$ to $B\in\mathscr{Q}$ if (i) $B\subsetneq A$
	and (ii) there is no $C\in\mathscr{Q}$ with $B\subsetneq C\subsetneq A$. As
	we shall see later, Hasse diagrams are of interest here because they
	guarantee well-behaved least common ancestors.
	
	The correspondence between Hasse diagrams that are networks and $k$-ary
	transit function is summarized in the following
	\begin{lemma}\label{lem:Hasse-network}
		Let $R$ be a $k$-ary transit function. Then $\Hasse(\mathscr{C}_R)$ is a
		network if and only if $R$ satisfies \AX{(a')} for $k$.
	\end{lemma}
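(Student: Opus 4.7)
The plan is to reduce the claim to a statement about $\subseteq$-maximal elements of the set system $\mathscr{C}_R$. First I would observe that in the Hasse diagram of any finite set system $\mathscr{Q}$, the vertices of indegree $0$ are exactly the $\subseteq$-maximal elements of $\mathscr{Q}$, and that in any finite poset a unique maximal element is automatically an upper bound for every other element (take any element and walk upward along strict containments until a maximal element is reached). Combining these two observations, $\Hasse(\mathscr{C}_R)$ is a network if and only if $\mathscr{C}_R$ has a unique $\subseteq$-maximal element.

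For the $(\Leftarrow)$ direction I would assume $\AX{(a')}$ holds, so $X=R(U)\in\mathscr{C}_R$ for some $U\in X^{(k)}$. Because every transit set $R(W)$ is a subset of $X$ by definition of the codomain of $R$, the set $X$ is clearly the unique $\subseteq$-maximal element of $\mathscr{C}_R$, and thus $\Hasse(\mathscr{C}_R)$ is a network rooted at $X$.

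For the $(\Rightarrow)$ direction I would pick the unique source $M\in\mathscr{C}_R$ of $\Hasse(\mathscr{C}_R)$. By the poset observation above, every $R(W)\in\mathscr{C}_R$ satisfies $R(W)\subseteq M$. Using $\AX{(t1)}$ and $\AX{(t3)}$ in their set form (so that $\{x\}=R(\{x\})\in\mathscr{C}_R$ for every $x\in X$) I would get $\{x\}\subseteq M$ for all $x\in X$, hence $M=X$. Since $M=R(U_0)$ for some $U_0\in X^{(k)}$, this yields $R(U_0)=X$, i.e.\ $\AX{(a')}$.

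I do not anticipate any serious obstacle here; the statement essentially amounts to unfolding the definitions of Hasse diagram, network, and $\AX{(a')}$. The only point that requires even a sentence is the folklore fact that a finite poset with a unique maximal element already has that element as a top, which I would dispatch in one line.
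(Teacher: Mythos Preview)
Your proposal is correct and follows essentially the same approach as the paper: both arguments pivot on the observation that all singletons $\{x\}$ lie in $\mathscr{C}_R$ by \AX{(t3)}, so that the unique indegree-$0$ vertex of $\Hasse(\mathscr{C}_R)$ must equal $X$, and conversely $X\in\mathscr{C}_R$ forces a unique root. Your explicit reduction to ``unique $\subseteq$-maximal element'' is a slightly cleaner packaging of the same reasoning; note that \AX{(t3)} alone already yields $R(\{x\})=\{x\}$, so invoking \AX{(t1)} there is unnecessary.
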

	\begin{proof}
		If $N\coloneqq\Hasse(\mathscr{C}_R)$ is a network, it contains a unique
		vertex $\rho$ with indegree $0$; the root of $N$. Since $R$ satisfies
		\AX{(t3)}, all singletons $\{x\}$ with $x\in X$ are contained as vertices
		of $N$. Since $N$ has a unique root, it follows that $X$ is a vertex of
		$N$ and, in particular, $\CC(\rho)=X \in \mathscr{C}_R$.  This implies
		that there must be a subset $U\in X^{(k)}$ such that $R(U)=X$.  Hence,
		$R$ satisfies \AX{(a')}.  Conversely, if \AX{(a')} with parameter $k$
		holds, there is a subset $U\in X^{(k)} $ with $R(U)=X$ and thus
		$X\in\mathscr{C}_R$.  Let $v_X$ be the vertex in $\Hasse(\mathscr{C}_R)$
		for which $\CC(v_X)=X$ holds.  Since $v\preceq v_{X}$ for every vertex
		$v$ in $\Hasse(\mathscr{C}_R)$, this is in particular true for the
		singletons, and thus $v_X$ serves as the unique root of
		$\Hasse(\mathscr{C}_R)$.  \qed
	\end{proof}
	
	Following \cite{Hellmuth:22q}, we say that a DAG $G=(V,E)$ has the
	\emph{path-cluster-comparability \AX{(PCC)}} property if it satisfies, for
	all $u, v\in V$: $u$ and $v$ are $\preceq_G$-comparable if and only if
	$\CC(u)\subseteq \CC(v)$ or $\CC(v)\subseteq \CC(u)$.  By \cite[Cor.\ 11 \&
	Prop.\ 3]{Hellmuth:22q}, the Hasse diagram $G$ of a clustering system 
	$\mathscr{C}$ satisfies \AX{(PCC)} and \cite[Prop.\ 2]{Hellmuth:22q}
	implies that $\mathscr{C}_G=\mathscr{C}$.
	
	\section{DAGs with $\boldsymbol{\lca}$- and $\boldsymbol{\klca}$-property}
	
	In the following, we consider the generalization of lca-networks introduced
	in \cite{Hellmuth:22q} for arbitrary (not necessarily rooted) DAGs.
	\begin{definition} 
		A DAG with leaf set $X$ has the $\lca$-property if $\lca(A)$ is defined
		for all non-empty $A\subseteq X$.
	\end{definition}
	By definition, every DAG with the lca-property also has the pairwise
	lca-property.  The converse is, in general, not satisfied. An
	example of a network (rooted DAG) that satisfies the pairwise lca-property
	but that is not an lca-network, can be found in
	\cite[Fig.\ 13(A)]{Hellmuth:22q}.
	
	\begin{lemma}\label{lem:lca->closed}
		If a DAG $G$ has the $\lca$-property then its clustering system
		$\mathscr{C}_G$ is closed.
	\end{lemma}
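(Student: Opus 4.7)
The plan is to exploit the equivalent characterization of closedness already recorded in the paper: a set system $\mathscr{C}$ is closed if and only if $A\cap B\in\mathscr{C}$ whenever $A,B\in\mathscr{C}$ and $A\cap B\neq\emptyset$. So the task reduces to showing that $\mathscr{C}_G$ is closed under non-empty pairwise intersection.

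Concretely, I would take two clusters $\CC(u),\CC(v)\in\mathscr{C}_G$ with non-empty intersection $S\coloneqq \CC(u)\cap \CC(v)$, viewed as a subset of the leaf set $X$. Because $G$ has the $\lca$-property and $S\neq\emptyset$, the vertex $w\coloneqq\lca(S)$ is defined. By construction, $S\subseteq \CC(w)$, giving one inclusion.

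For the reverse inclusion I would invoke Observation~\ref{obs:deflcaY}(i): since $S\subseteq\CC(u)$ and $S\subseteq\CC(v)$, we have $w\preceq_G u$ and $w\preceq_G v$. Lemma~\ref{L3.33} then yields $\CC(w)\subseteq \CC(u)$ and $\CC(w)\subseteq \CC(v)$, so $\CC(w)\subseteq \CC(u)\cap\CC(v)=S$. Combining both inclusions gives $S=\CC(w)\in\mathscr{C}_G$, as desired.

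The argument is essentially a bookkeeping exercise, and I do not anticipate a genuine obstacle. The only point that requires a moment of care is remembering that ``closed'' is defined via the closure operator on \emph{all} subsets, not just on those of the form $\CC(u)\cap\CC(v)$; the reduction to pairwise intersection is precisely the content of the cited Lemma~16 of \cite{Hellmuth:22q} invoked in the preliminaries, and this is what licenses the short proof sketched above.
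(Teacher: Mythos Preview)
Your proof is correct and follows essentially the same approach as the paper's: reduce closedness to closure under non-empty pairwise intersection, apply the $\lca$-property to $S=\CC(u)\cap\CC(v)$, and use Lemma~\ref{L3.33} to obtain $\CC(w)=S$. Your appeal to Observation~\ref{obs:deflcaY}(i) to get $w\preceq u$ and $w\preceq v$ is in fact slightly cleaner than the paper's version, which first disposes of the cases $\CC(u)\subseteq\CC(v)$ and $\CC(v)\subseteq\CC(u)$ and then argues via incomparability; your route handles all non-empty intersections uniformly.
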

	\begin{proof}
		To show that $\mathscr{C}_G$ is closed, we use the equivalent condition
		that $\mathscr{C}_G$ is closed under pairwise intersection.  Thus, let
		$\CC(u), \CC(v) \in \mathscr{C}_G$ for some $u,v\in V(G)$.  If $\CC(u)
		\subseteq \CC(v)$, $\CC(v) \subseteq \CC(u)$ or $\CC(u) \cap
		\CC(v)=\emptyset$, there is nothing to show. Hence, assume that
		$\CC(u)\overlaps \CC(v)$ and set $A\coloneqq\CC(u)\cap
		\CC(v)\ne\emptyset$. Since $G$ has the $\lca$-property, there is $w\in
		V(G)$ such that $w=\lca(A)$, and thus $A\subseteq \CC(w)$.
		The contraposition of Lemma~\ref{L3.33} shows that $u$ and $v$ are two
		incomparable common ancestors of $A$. Since $w$ is the unique
		$\preceq$-minimal common ancestor of $A$, we have $w\preceq u$ and
		$w\preceq v$, which -- together with Lemma \ref{L3.33} -- implies
		$\CC(w)\subseteq \CC(u)$ and $\CC(w)\subseteq \CC(v)$. Therefore
		$\CC(w)\subseteq A$.  Hence $A=\CC(w)\in\mathscr{C}_G$ and thus,
		$\mathscr{C}_G$ is closed.
		\qed
	\end{proof}
	The converse of Lemma~\ref{lem:lca->closed} is not true. A counter-example
	can be found in Fig.~\ref{fig:counter-ex-kc}.  The following connection
	between the clusters, the least common ancestors, and the closure function
	will be useful in the remainder of this contribution:
	\begin{fact}\label{L6.22}
		If $G$ is a DAG with leaf set $X$ and $\lca$-property, then
		$\CC(\lca(Y))=\cl(Y)$ for all $\emptyset \neq Y\subseteq X$.
	\end{fact}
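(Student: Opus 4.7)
The plan is to prove the equality $\CC(\lca(Y))=\cl(Y)$ by establishing the two inclusions separately, using only the definition of $\cl$ as the intersection $\cl(Y)=\bigcap\{C\in\mathscr{C}_G\mid Y\subseteq C\}$ together with Observation~\ref{obs:deflcaY} and Lemma~\ref{L3.33}. Since $G$ has the $\lca$-property, $\lca(Y)$ is defined for every non-empty $Y\subseteq X$, so the statement is well-posed.

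For the inclusion $\cl(Y)\subseteq\CC(\lca(Y))$, I would observe that $Y\subseteq\CC(\lca(Y))$ by definition of $\lca$, and $\CC(\lca(Y))\in\mathscr{C}_G$ by construction of $\mathscr{C}_G$. Hence $\CC(\lca(Y))$ is one of the sets participating in the intersection that defines $\cl(Y)$, which immediately yields $\cl(Y)\subseteq\CC(\lca(Y))$.

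For the reverse inclusion, I would pick an arbitrary $C\in\mathscr{C}_G$ with $Y\subseteq C$, write $C=\CC(v)$ for some $v\in V(G)$, and apply Observation~\ref{obs:deflcaY}(i) to conclude $\lca(Y)\preceq_G v$. Lemma~\ref{L3.33} then gives $\CC(\lca(Y))\subseteq\CC(v)=C$. Since this holds for every such $C$, intersecting over all of them yields $\CC(\lca(Y))\subseteq\cl(Y)$.

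I do not foresee a serious obstacle: the result follows by combining the definitions with the previously established Observation~\ref{obs:deflcaY}(i) (which is exactly the universal property of $\lca(Y)$ needed for the second inclusion) and Lemma~\ref{L3.33} (which translates the partial order on vertices into inclusion of clusters). The only subtle point to flag is that the $\lca$-property, rather than the mere pairwise $\lca$-property, is essential to ensure $\lca(Y)$ exists for arbitrary non-empty $Y\subseteq X$, so that the left-hand side of the claimed identity is defined.
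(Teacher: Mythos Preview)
Your argument is correct and essentially matches the approach the paper has in mind: both rely on Observation~\ref{obs:deflcaY}(i) together with Lemma~\ref{L3.33} to show that $\CC(\lca(Y))$ is contained in every cluster containing $Y$, and the reverse inclusion is immediate from $\CC(\lca(Y))\in\mathscr{C}_G$.

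One small difference worth noting: the paper's proof sketch also invokes Lemma~\ref{lem:lca->closed} (closedness of $\mathscr{C}_G$), following the route of the cited reference. Your two-inclusion argument bypasses this entirely, since you never need to know that $\cl(Y)$ itself lies in $\mathscr{C}_G$; you only compare $\CC(\lca(Y))$ against each individual $C\in\mathscr{C}_G$ with $Y\subseteq C$. This makes your version slightly more elementary and self-contained.
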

	\begin{proof}
		The argument follows the proof of \cite[L.\ 41]{Hellmuth:22q}, observing
		that \cite[L.17]{Hellmuth:22q} remains true for arbitrary DAGs, and
		substituting Obs.~\ref{obs:deflcaY}(i) and Lemma~\ref{lem:lca->closed}
		for \cite[Cor.\ 18]{Hellmuth:22q} and \cite[P.\ 11]{Hellmuth:22q},
		respectively.
		\qed
	\end{proof}

	\begin{figure}[tb]
		\begin{minipage}{0.30\textwidth}
			\includegraphics[width=1.\textwidth]{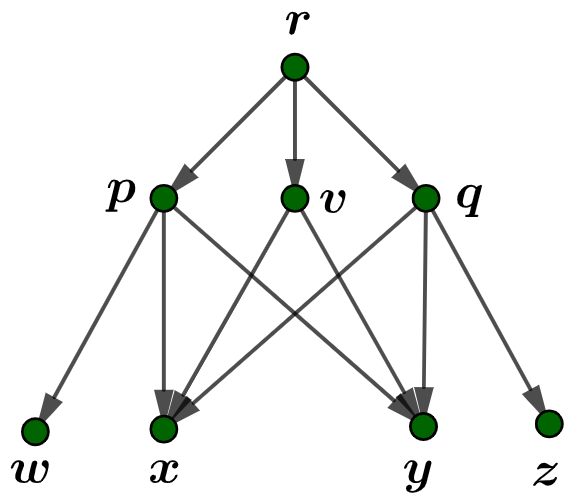}
		\end{minipage} \quad 
		\begin{minipage}{0.67\textwidth}
			\caption{The cluster system
				$\mathscr{C}_G=\big\{\{w\},\{x\},\{y\},\{x\},\{x,y\},\{w,x,y\},
				\{x,y,z\},\{w,x,y,z\}\big\}$ of the network $G$ is closed and
				satisfies \AX{(KC)} for every $k\in \{1,2,3,4\}$. However, we have
				$\LCA(\{x,y\})=\{p,v,q\}$ and thus $G$ does not have the pairwise
				$\lca$-property. By Prop.~\ref{prop:klca-implications}, if $G$ has
				the pairwise lca-property, then $\mathscr{C}_G$ is pre-binary. The
				example in this Figure shows that the converse is not true.
				In particular, the equivalence between pre-$k$-ary and the
				$\klca$-property in Prop.~\ref{prop:pLCA-pre-k-ary} requires
				\AX{(PCC)}, which is not satisfied by $G$.}
			\label{fig:counter-ex-kc}
		\end{minipage}
	\end{figure}

	The observations above can be extended to networks where more least common
	ancestors exist and are unique for all leaf sets of size at most $k$.
	Naturally, we start from the cluster system $\mathscr{C}\coloneqq
	\mathscr{C}_G\coloneqq\{\CC(v); v\in V(G)\}$ and consider the
	map $R_{\mathscr{C}}:X^k\to 2^X$ defined by
	\begin{equation*}
		R_{\mathscr{C}}(u_1,u_2,\dots,u_k) \coloneqq
		\bigcap\{ \CC(v) \mid  v\in V(G), u_1,u_2,\dots,u_k\in \CC(v) \} 
	\end{equation*}
	One easily verifies that $R_{\mathscr{C}}$ satisfies \AX{(t2)} and, thus, we can
	again interpret $R_{\mathscr{C}}$ as a function over sets in which case we
	have $R_{\mathscr{C}}(U)=\cl(U)$ for all $U\in X^{(k)}$.  In this setting,
	we are interested in cases where $\lca(U)$ is defined at least for all sets
	of cardinality $|U|\le k$. We formalize this idea in
	\begin{definition} 
		A DAG $G$ with leaf set $X$ has the $\klca$-property if $\lca(A)$ is
		defined for all $A\in X^{(k)}$.
	\end{definition}
	
	Now, we define the $k$-ary map $R_G: X^k\to
	2^X$ by $R_G(u_1,\dots,u_k) \coloneqq \CC(\lca(u_1,\dots,u_k))$; in set
	notation this reads $R_G(U)=\CC(\lca(U))$ for all $U\in X^{(k)}$.
	\begin{proposition}\label{prop:klca-implications}
		Let $G$ be a DAG with $\klca$-property. Then $R_G$ is a monotone, $k$-ary
		transit function that satisfies $R_G= R_{\mathscr{C}_G}$. Moreover, 
		$\mathscr{C}_G$ is pre-$k$-ary.  
	\end{proposition}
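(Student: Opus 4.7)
The plan is to verify the three assertions in sequence: first the transit-function axioms for $R_G$, then the identity $R_G=R_{\mathscr{C}_G}$, then monotonicity, and finally \AX{(KC)}, which will essentially drop out of the identity.

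Checking \AX{(t1)}--\AX{(t3)} is routine. Axiom \AX{(t2)} is immediate because $\lca(U)$ depends only on the underlying set $U$. For \AX{(t1)}, $\lca(U)$ is by definition a common ancestor of each $u_i\in U$, so $u_i\preceq\lca(U)$ and hence $u_i\in\CC(\lca(U))=R_G(U)$. For \AX{(t3)}, one uses that $\lca(u)=u$ for every leaf $u$, so $R_G(u,\dots,u)=\CC(u)=\{u\}$.

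The heart of the argument is the identity $\CC(\lca(U))=\cl(U)$ for every $U\in X^{(k)}$. I would adapt the proof of Fact~\ref{L6.22} rather than cite it directly, since that fact was stated under the full $\lca$-property: the inclusion $\cl(U)\subseteq\CC(\lca(U))$ is trivial because $\CC(\lca(U))\in\mathscr{C}_G$ is among the clusters intersected in the definition of $\cl(U)$; for the reverse inclusion I would pick any $v\in V(G)$ with $U\subseteq\CC(v)$, apply Obs.~\ref{obs:deflcaY}(i) to conclude $\lca(U)\preceq v$, and then use Lemma~\ref{L3.33} to obtain $\CC(\lca(U))\subseteq\CC(v)$; intersecting over all such $v$ gives $\CC(\lca(U))\subseteq\cl(U)$. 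This simultaneously establishes $R_G=R_{\mathscr{C}_G}$.

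Monotonicity and \AX{(KC)} are then short. For \AX{(m)}, let $W,U\in X^{(k)}$ with $W\subseteq R_G(U)=\CC(\lca(U))$. Then $\lca(U)$ is a common ancestor of all elements of $W$, and since $|W|\le k$ the $\klca$-property supplies $\lca(W)$; Obs.~\ref{obs:deflcaY}(i) gives $\lca(W)\preceq\lca(U)$, and Lemma~\ref{L3.33} translates this to $R_G(W)\subseteq R_G(U)$. Axiom \AX{(KC)} now follows at once from the identity already established: for every non-empty $U\in X^{(k)}$, $\bigcap\{C\in\mathscr{C}_G\mid U\subseteq C\}=\cl(U)=\CC(\lca(U))\in\mathscr{C}_G$. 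No step here is a genuine obstacle; the only point requiring attention is bookkeeping on set sizes, namely that every invocation of $\lca$ in the argument is on a set of size at most $k$, so that the $\klca$-property (rather than the stronger $\lca$-property used in Fact~\ref{L6.22}) is enough.
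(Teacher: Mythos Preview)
Your proposal is correct and follows essentially the same route as the paper: both verify \AX{(t1)}--\AX{(t3)} directly, use Obs.~\ref{obs:deflcaY}(i) together with Lemma~\ref{L3.33} for monotonicity, and derive \AX{(KC)} from the identity $\CC(\lca(U))=\cl(U)$. The only cosmetic difference is that the paper obtains this identity by citing Obs.~\ref{obs:deflcaY}(ii) (unique inclusion-minimality of $\CC(\lca(U))$), whereas you unpack that observation by hand via the two inclusions---which is fine and makes the $\klca$-bookkeeping explicit.
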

	\begin{proof}
		Let $G$ be a DAG with $\klca$-property and leaf set $X$.  It follows
		directly from the definition and uniqueness of $\lca(U)$ for $U\in
		X^{(k)}$ that $R_G$ satisfies \AX{(t1)}, \AX{(t2)} and \AX{(t3)}, i.e,
		$R_G$ is a $k$-ary transit function. If $u_1,\dots, u_k\in
		R_G(x_1,\dots,x_k)= \CC(\lca(x_1,\dots,x_k))$, then $\{u_1,\dots,
		u_k\}\subseteq \CC(\lca(x_1,\dots,x_k))$ and we can apply
		Obs.\ref{obs:deflcaY}(i) to conclude that $\lca(u_1,\dots, u_k)\preceq
		\lca(x_1,\dots ,x_k)$. Applying Lemma \ref{L3.33} yields
		$\CC(\lca(u_1,\dots, u_k)) \subseteq \CC(\lca(x_1,\dots ,x_k))$, and thus
		$R_G(u_1,\dots, u_k)\subseteq R_G(x_1,\dots,x_k)$, i.e., $R_G$ is
		monotone.  It follows from Obs.~\ref{obs:deflcaY}(ii) that $\CC(\lca(U))$
		is the unique inclusion minimal cluster in $\mathscr{C}_G$ containing
		$U$, i.e., $\cl(U)=\CC(\lca(U))$ for all $U\in X^{(k)}$.  Consequently,
		$R_G=R_{\mathscr{C}_G}$.
		
		Since $G$ has the $\klca$-property, $\lca(U)$ is
		defined for all $U\in X^{(k)}$. Thus $\CC(\lca(U))$ is the unique
		inclusion minimal cluster in $\mathscr{C}_G$ containing $U$ for all 
		$U\in X^{(k)}$ by Obs.~\ref{obs:deflcaY}(ii); hence $\mathscr{C}_{G}$ satisfies
		\AX{(KC)} for $k$, i.e., $\mathscr{C}_G$ is pre-$k$-ary.
		\qed
	\end{proof}
	
	Note that a DAG $G$ for which $\mathscr{C}_G$ is pre-$k$-ary 
	does not necessarily have the $\klca$-property, see
	Fig.~\ref{fig:kc-lca} and Fig.~\ref{fig:counter-ex-kc} for a
	counter-example.  Moreover, \AX{(KR)} with parameter $k$ is not necessarily
	satisfied since the $\klca$-property does not claim the existence of
	clusters that are not associated with least common ancestors of a set $U\in
	X^{(k)}$. Therefore, $R_G$ need not identify $\mathscr{C}_G$.
	At least for an important subclass of DAGs there is a
	simple correspondence between the uniqueness of LCAs and a property of the
	$\mathscr{T}$-system.

	\begin{figure}[t]
		\begin{minipage}{0.25\textwidth}
			\includegraphics[width=\textwidth]{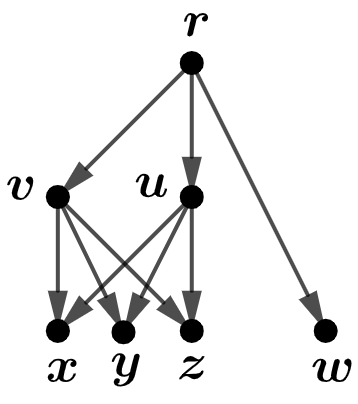}
		\end{minipage} \quad\qquad
		\begin{minipage}{0.65\textwidth}
			\caption{Consider the DAG $G$ with leaf set $X=L(G)$ where
				$\mathscr{C}_G=\{\{x\},\{y\},\{z\},\{w\},\{x,y,z\},X\}$.  Here,
				$\mathscr{C}_G$ satisfies \AX{(KS)} and \AX{(KC)} for $k=2$.  By
				definition, $\mathscr{C}_G$ is thus pre-binary. However, $G$ is not a
				pairwise $\lca$-network since $\lca(x,y)$, $\lca(x,z)$, and
				$\lca(y,z)$ are not defined. Moreover, $\mathscr{C}_G$ also satisfies
				\AX{(KC)} for $k=3$ but $G$ is not a \textit{3}-$\lca$-network since
				$\lca(x,y,z)$ is not defined.}
			\label{fig:kc-lca}
		\end{minipage}
	\end{figure}
	
	\begin{proposition}\label{prop:pLCA-pre-k-ary}
		Let $G$ be a DAG that satisfies \AX{(PCC)}. Then, $G$ satisfies the
		$\klca$-property if and only if $\mathscr{C}_G$ is pre-$k$-ary.
	\end{proposition}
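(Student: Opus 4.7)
The forward direction is already established: if $G$ has the $\klca$-property, then Proposition~\ref{prop:klca-implications} tells us $\mathscr{C}_G$ is pre-$k$-ary, independently of \AX{(PCC)}. So my plan is to focus on the converse, where \AX{(PCC)} is essential (cf.\ Fig.~\ref{fig:counter-ex-kc}).

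Suppose $\mathscr{C}_G$ is pre-$k$-ary and \AX{(PCC)} holds. Fix $U\in X^{(k)}$. Since $\mathscr{C}_G$ satisfies \AX{(KC)} for $k$, the closure $\cl(U)=\bigcap\{C\in\mathscr{C}_G\mid U\subseteq C\}$ lies in $\mathscr{C}_G$, so there is some $w\in V(G)$ with $\CC(w)=\cl(U)$. Since $U\subseteq \cl(U)=\CC(w)$, we have $w\in\Anc(U)$, and in particular $\Anc(U)\ne\emptyset$, so $\LCA(U)\ne\emptyset$. To establish $\klca$, I need to show that every element of $\LCA(U)$ equals $w$, or equivalently, that $w$ is the unique $\preceq$-minimal element of $\Anc(U)$.

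The core step is the following comparability argument, which is where \AX{(PCC)} enters. Take any $v\in\Anc(U)$; then $U\subseteq\CC(v)$, hence $\cl(U)\subseteq\CC(v)$, i.e., $\CC(w)\subseteq\CC(v)$. By \AX{(PCC)}, $w$ and $v$ are $\preceq_G$-comparable. If $v\preceq w$, then Lemma~\ref{L3.33} forces $\CC(v)\subseteq\CC(w)$, so $\CC(v)=\CC(w)=\cl(U)$. Applying the same argument to any pair of vertices carrying the cluster $\cl(U)$, all such vertices form a $\preceq$-chain (by \AX{(PCC)} and antisymmetry of $\preceq$ in the DAG $G$); since $G$ is finite, this chain has a unique minimum, which I take as $w$. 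With this choice of $w$, the case $v\preceq w$ forces $v=w$; otherwise $w\preceq v$. Thus $w\preceq v$ for every $v\in\Anc(U)$, making $w$ the unique $\preceq$-minimum of $\Anc(U)$, and hence $\lca(U)=w$ is defined.

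The only technical subtlety is handling the possibility that several vertices share the cluster $\cl(U)$: without \AX{(PCC)} they need not be comparable, which is exactly what the counter-example in Fig.~\ref{fig:counter-ex-kc} exploits. \AX{(PCC)} collapses this ambiguity into a chain, from which the uniqueness of $\lca(U)$ follows.
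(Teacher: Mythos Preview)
Your proof is correct and follows essentially the same strategy as the paper: both arguments hinge on picking a vertex whose cluster equals $\cl(U)$ and then using \AX{(PCC)} to force comparability with every common ancestor of $U$. The only difference is packaging: the paper argues by contradiction (two distinct elements of $\LCA(U)$ would both lie strictly above such a vertex, contradicting their $\preceq$-minimality), whereas you proceed constructively by selecting the $\preceq$-minimum among all vertices carrying the cluster $\cl(U)$ and showing it is the unique minimum of $\Anc(U)$. Your explicit treatment of the chain of vertices sharing the cluster $\cl(U)$ is a nice touch that the paper handles only implicitly via the incomparability of two putative LCAs.
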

	\begin{proof}
		Suppose that $G$ is a DAG with leaf set $X$ that satisfies \AX{(PCC)}. If $G$
		satisfies the $\klca$-property, then Prop.\ \ref{prop:klca-implications}
		implies that $\mathscr{C}_G$ is pre-$k$-ary.
		Assume now that $\mathscr{C}_G$ is pre-$k$-ary.  Hence, for all $U\in
		X^{(k)}$ we have $R_{\mathscr{C}_G}(U)=\bigcap\{C\in\mathscr{C}_G\mid
		U\subseteq C\} \in \mathscr{C}_G$.  Therefore, $\Anc(\{U\})\neq
		\emptyset$ and thus, $\LCA(\{U\})\neq \emptyset$. Assume, for
		contradiction, that there are two distinct vertices $v,w\in
		\LCA(U)$. Note that $U\subseteq R_{\mathscr{C}_G}(U)\subseteq \CC(v)\cap
		\CC(w)$.  By Def.\ \ref{def:lca}, both $v$ and $w$ are $\prec$-minimal
		ancestors of the vertices in $U$ and, therefore, $v$ and $w$ are
		incomparable in $G$. This, together with the fact that $G$ satisfies \AX{(PCC)}
		implies that neither $\CC(v)\subseteq \CC(w)$ nor $\CC(w)\subseteq
		\CC(v)$ can hold.  This and $\CC(v)\cap \CC(w)\neq \emptyset$ implies
		that $\CC(v)\overlaps\CC(w)$.  Since $\mathscr{C}_G$ is pre-$k$-ary,
		$R_{\mathscr{C}_G}(U)\in \mathscr{C}_G$, i.e., there is a vertex $z\in
		V(G)$ such that $\CC(z) = R_{\mathscr{C}_G}(U)$. Hence, $\CC(z)\subseteq
		\CC(w)\cap \CC(v)$. Since $\CC(v)\overlaps\CC(w)$ it must hold that
		$\CC(z)\subsetneq \CC(w)$ and $\CC(z)\subsetneq \CC(v)$.  Since $G$
		satisfies \AX{(PCC)}, $z$ and $v$ must be $\preceq$-comparable.  If, however,
		$v\preceq z$, then Lemma \ref{L3.33} implies that $\CC(v)\subseteq
		\CC(z)$; a contradiction. Hence, $z\prec v$ and, by similar arguments,
		$z\prec w$ must hold.  This, however, contradicts the fact that $v$ and
		$w$ are $\prec$-minimal ancestors of all the vertices in $U$. Hence,
		$|\LCA(U)|=1$ must hold for all $U\in X^{(k)}$.  Consequently, $G$
		satisfies the $\klca$-property.  \qed
	\end{proof}
	
	Consequently, we obtain a characterization of pre-$k$-ary clustering
	systems in terms of the DAGs from which they derive.
	\begin{theorem}\label{thm:char-pre-k-ary}
		A clustering system $\mathscr{C}$ is pre-$k$-ary if and only if there is
		a DAG $G$ with $\mathscr{C} = \mathscr{C}_G$ and $\klca$-property.
	\end{theorem}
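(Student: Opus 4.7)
The plan is to split the theorem into the two directions and handle them using the two propositions immediately preceding it.

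For the easy direction, suppose $G$ is a DAG with $\mathscr{C} = \mathscr{C}_G$ that enjoys the $\klca$-property. Then Proposition~\ref{prop:klca-implications} directly tells us $\mathscr{C}_G$ is pre-$k$-ary, so $\mathscr{C}$ itself is pre-$k$-ary. No further work is needed here.

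For the nontrivial direction, assume $\mathscr{C}$ is a pre-$k$-ary clustering system on $X$. The natural candidate DAG is the Hasse diagram $G \coloneqq \Hasse(\mathscr{C})$. First I would invoke the results cited after Lemma~\ref{lem:Hasse-network} (namely \cite[Prop.~2, Prop.~3, Cor.~11]{Hellmuth:22q}) which guarantee two facts about the Hasse diagram of a clustering system: $\mathscr{C}_G = \mathscr{C}$ and $G$ satisfies \AX{(PCC)}. The minimal vertices of $G$ are precisely the singletons $\{x\}$ for $x\in X$ (present by \AX{(KS)}), so identifying them with $X$ gives a DAG with leaf set $X$ in the sense used throughout.

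With these two properties in hand, Proposition~\ref{prop:pLCA-pre-k-ary} applies: since $G$ satisfies \AX{(PCC)} and $\mathscr{C}_G = \mathscr{C}$ is pre-$k$-ary, $G$ has the $\klca$-property. This completes the construction and therefore the proof.

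There is no real obstacle to overcome, since all the heavy lifting has already been done in Propositions~\ref{prop:klca-implications} and~\ref{prop:pLCA-pre-k-ary}; the theorem is essentially a clean packaging of those two results once one recognizes that the Hasse diagram provides a canonical witness and that \AX{(PCC)} is automatic for it. The only subtlety worth emphasizing in writing is that Figure~\ref{fig:counter-ex-kc} shows one cannot drop \AX{(PCC)} and use an arbitrary DAG realization of $\mathscr{C}$; the Hasse diagram is the right choice precisely because it satisfies \AX{(PCC)} for free.
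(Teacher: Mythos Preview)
Your proposal is correct and follows essentially the same approach as the paper: use Proposition~\ref{prop:klca-implications} for the implication from the $\klca$-property to pre-$k$-ary, and for the converse take $G=\Hasse(\mathscr{C})$, invoke the cited facts that $\mathscr{C}_G=\mathscr{C}$ and that $G$ satisfies \AX{(PCC)}, and then apply Proposition~\ref{prop:pLCA-pre-k-ary}. The only cosmetic difference is the order in which you treat the two directions.
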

	\begin{proof}
		Suppose that $\mathscr{C}$ is a pre-$k$-ary clustering system and
		consider the Hasse diagram $G\coloneqq \Hasse(\mathscr{C})$. It satisfies
		\AX{(PCC)} and $\mathscr{C}_G=\mathscr{C}$.  Consequently,
		$\mathscr{C}_G$ is pre-$k$-ary. Thus, we can apply
		Prop.~\ref{prop:pLCA-pre-k-ary} to conclude that $G$ satisfies the
		$\klca$-property.
		Conversely, suppose that $G$ is a DAG with the $\klca$-property and
		$\mathscr{C} = \mathscr{C}_G$. By Prop.\ \ref{prop:klca-implications},
		$\mathscr{C}$ is pre-$k$-ary.  \qed
	\end{proof}
	
	Next, we show that $k$-ary transit functions give rise to DAGs
	with the $\klca$-property in a rather natural way:
	\begin{lemma}\label{lem:Hasse-k}
		Let $R$ be a monotone $k$-ary transit function. Then, the Hasse diagram of
		its transit sets $\Hasse(\mathscr{C}_R)$ satisfies the $\klca$-property.
	\end{lemma}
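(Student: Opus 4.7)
The plan is to exhibit, for every $U\in X^{(k)}$, the set $R(U)$ itself as the (unique) least common ancestor of $U$ in $\Hasse(\mathscr{C}_R)$, with monotonicity doing the real work.

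First I would set up the identification between the leaves of $H\coloneqq\Hasse(\mathscr{C}_R)$ and $X$. Axiom \AX{(t3)} gives $\{x\}=R(x,\ldots,x)\in\mathscr{C}_R$ for every $x\in X$, so every singleton is a vertex of $H$. Transit sets are non-empty by \AX{(t1)}, so any non-singleton $A\in\mathscr{C}_R$ properly contains some singleton $\{x\}$ with $x\in A$; hence the $\subseteq$-minimal elements of $\mathscr{C}_R$ are exactly the singletons, and we identify the leaf set $L(H)$ with $X$ via $\{x\}\leftrightarrow x$. By construction of the Hasse diagram, the ancestor order $\preceq_H$ on $\mathscr{C}_R$ coincides with set inclusion, so for any vertex $A$ of $H$ we have $\CC(A)=\{x\in X\mid \{x\}\subseteq A\}=A$.

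Fix $U\in X^{(k)}$. By \AX{(t1)} (in its set-form), $U\subseteq R(U)$, so $R(U)$ is a common ancestor in $H$ of all singletons $\{u\}$ with $u\in U$, i.e., $R(U)\in\Anc(U)$. It remains to check uniqueness and minimality. Let $B$ be any common ancestor of $U$ in $H$; then $B\in\mathscr{C}_R$, so $B=R(V)$ for some $V\in X^{(k)}$, and the condition $U\subseteq\CC(B)=B$ means $U\subseteq R(V)$. Monotonicity \AX{(m)} of $R$ then yields $R(U)\subseteq R(V)=B$, i.e., $R(U)\preceq_H B$. This shows $R(U)$ lies below every common ancestor of $U$ in $H$, so $R(U)$ is the unique $\preceq_H$-minimal common ancestor. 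Hence $\lca(U)=R(U)$ is defined for every $U\in X^{(k)}$, which is exactly the $\klca$-property.

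I do not anticipate a serious obstacle: the work has already been done in setting up the correspondence between monotone $k$-ary transit functions and $\mathscr{T}$-systems earlier in the paper. The only point requiring a little care is the identification of the leaf set of $\Hasse(\mathscr{C}_R)$ with $X$, which relies solely on \AX{(t3)} guaranteeing that the singletons are vertices of $H$, and on non-emptiness of transit sets ensuring they are the $\subseteq$-minimal ones. Once this is fixed, monotonicity gives both the existence and the uniqueness of the LCA in one stroke.
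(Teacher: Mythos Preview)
Your proposal is correct and follows essentially the same approach as the paper: both arguments show that $R(U)$ is the unique $\preceq$-minimal common ancestor of $U$ in $\Hasse(\mathscr{C}_R)$ by invoking \AX{(t1)} for $U\subseteq R(U)$ and monotonicity \AX{(m)} for $R(U)\subseteq R(W)$ whenever $U\subseteq R(W)$. Your version is in fact more careful than the paper's, since you explicitly verify that the $\subseteq$-minimal vertices of $\Hasse(\mathscr{C}_R)$ are precisely the singletons (and hence $L(H)\cong X$ and $\CC(A)=A$), a point the paper takes for granted.
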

	\begin{proof}
		Let $R$ be a monotone transit function on $X$ and $U\in X^{(k)}$.
		Considering $R$ as a function over subsets, conditions \AX{(t1)} and
		\AX{(t2)} imply that $U\subseteq R(U)$.  In the following, let $v_C$
		denote the unique vertex in $\Hasse(\mathscr{C}_R)$ that corresponds to
		the cluster $C\in \mathscr{C}_R$.  For all $W\in X^{(k)}$ with
		$U\subseteq R(W)$ it holds, by condition \AX{(m)}, that $R(U) \subseteq
		R(W)$. This, together with the definition of the Hasse diagram implies
		that $v_{R(U)}\preceq v_{R(W)}$ for all $W\in X^{(k)}$ with $U\subseteq
		R(W)$.  Thus, $v_{R(U)}$ is the unique $\preceq$-minimal vertex in
		$\Hasse(\mathscr{C}_R)$ satisfying $x\preceq v_{R(U)}$ for all $x\in U$,
		and thus $v_{R(U)}=\lca(U)$.  \qed
	\end{proof}
	As an immediate consequence of the correspondence between monotone
	$k$-ary transit functions and $k$-ary $\mathscr{T}$-systems, we also
	conclude that the Hasse diagram of 
	$k$-ary $\mathscr{T}$-systems has
	the $\klca$-property.
	
	The converse of Lemma~\ref{lem:Hasse-k}, however, need not be true: A Hasse
	diagram $\Hasse(\mathscr{C}_R)$ with the $\klca$-property for some $k$ is
	not sufficient to imply that $R$ is monotone:
	
	\begin{example}
		\label{ex:nonmonotone}  
		Let $R$ on $X=\{a,b,c,d\}$ be symmetric and defined by $R(a,b)=X$,
		$R(a,c)=\{a,b,c\}$ and all other sets are singletons or $X$ in such a way
		that \AX{(t1)} and \AX{(t3)} is satisfied.  One easily verifies that $R$
		is a transit function satisfying \AX{(a')} and that
		$\Hasse(\mathscr{C}_R)$ is a network with root $X$. In fact,
		$\Hasse(\mathscr{C}_R)$ is a rooted tree having pairwise
		lca-property. However, $R$ is not monotone since $R(a,b)=X\nsubseteq
		R(a,c)$.
	\end{example}
	\begin{theorem}\label{thm:R-monotone-char}
		Let $R$ be a $k$-ary transit function. Then $R$ is monotone if and only
		if there is a DAG $G$ with $\klca$-property and that satisfies
		$\mathscr{C}_G = \mathscr{C}_R$ and $R_{\mathscr{C}_G}=R$.
	\end{theorem}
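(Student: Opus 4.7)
The plan is to prove both implications by combining the results already in the excerpt, with $\Hasse(\mathscr{C}_R)$ serving as the canonical witness for the forward direction and Proposition~\ref{prop:klca-implications} handling the converse.

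For the forward direction, assume $R$ is a monotone $k$-ary transit function. I would take $G := \Hasse(\mathscr{C}_R)$ and invoke Lemma~\ref{lem:Hasse-k} to conclude that $G$ has the $\klca$-property. What remains is to verify the two equalities $\mathscr{C}_G = \mathscr{C}_R$ and $R_{\mathscr{C}_G} = R$. For the first equality, I would argue as follows: axiom \AX{(t3)} forces every singleton $\{x\}$ into $\mathscr{C}_R$, so the $\preceq$-minimal vertices of $G$ (its leaves) are precisely the singletons, and the leaf set is $X$. For any vertex $v_C$ corresponding to $C \in \mathscr{C}_R$, there is a directed path in the Hasse diagram from $v_C$ down to $v_{\{x\}}$ if and only if $\{x\} \subseteq C$, i.e., $x \in C$; hence $\CC(v_C) = C$, giving $\mathscr{C}_G = \mathscr{C}_R$. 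For the second equality, since $R$ is monotone, it identifies $\mathscr{C}_R$, meaning $R = R_{\mathscr{C}_R}$; combined with $\mathscr{C}_G = \mathscr{C}_R$, this yields $R_{\mathscr{C}_G} = R_{\mathscr{C}_R} = R$.

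For the converse, suppose there exists a DAG $G$ with the $\klca$-property satisfying $\mathscr{C}_G = \mathscr{C}_R$ and $R_{\mathscr{C}_G} = R$. Proposition~\ref{prop:klca-implications} directly delivers that $R_G$ is a monotone $k$-ary transit function and that $R_G = R_{\mathscr{C}_G}$. By hypothesis, $R = R_{\mathscr{C}_G} = R_G$, so $R$ inherits monotonicity.

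I do not expect any significant obstacle: the theorem is essentially a repackaging of the correspondence between monotone transit functions and $k$-ary $\mathscr{T}$-systems together with Lemma~\ref{lem:Hasse-k} and Proposition~\ref{prop:klca-implications}. The only technical care needed is the identification of the leaf set of $\Hasse(\mathscr{C}_R)$ with $X$ via the singletons, which is immediate from \AX{(t3)}. Example~\ref{ex:nonmonotone} is what rules out dropping either equality $\mathscr{C}_G = \mathscr{C}_R$ or $R_{\mathscr{C}_G} = R$ from the statement, since the mere existence of a DAG with the $\klca$-property whose cluster system equals $\mathscr{C}_R$ is weaker and need not force monotonicity.
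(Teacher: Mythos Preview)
Your proposal is correct and follows essentially the same route as the paper: both take $G=\Hasse(\mathscr{C}_R)$ for the forward direction, appeal to Lemma~\ref{lem:Hasse-k} and the identity $R=R_{\mathscr{C}_R}$ for monotone $R$, and invoke Proposition~\ref{prop:klca-implications} for the converse via $R=R_{\mathscr{C}_G}=R_G$. The only difference is that you spell out explicitly why $\mathscr{C}_G=\mathscr{C}_R$ for the Hasse diagram, whereas the paper treats this as already established.
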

	\begin{proof}
		If $R$ is a monotone $k$-ary transit function, then
		$G=\Hasse(\mathscr{C}_R)$ satisfies $\mathscr{C}_G = \mathscr{C}_R$.  By
		Lemma \ref{lem:Hasse-k}, $G$ has the $\klca$-property.  Moreover, since
		$\mathscr{C}_G = \mathscr{C}_R$ it follows that
		$R_{\mathscr{C}_G}=R_{\mathscr{C}_R}$. Since $R$ is monotone,
		$R =R_{\mathscr{C}_R}=R_{\mathscr{C}_G}$.
		
		Conversely, let $G$ be a DAG with $\mathscr{C}_G = \mathscr{C}_R$ and
		$\klca$-property. By Prop.~\ref{prop:klca-implications}, $R_G
		= R_{\mathscr{C}_G}$ is monotone. Therefore, $R$ is a monotone  $k$-ary transit function.
	\end{proof}
	
	\section{DAGs with strict and strong  $\boldsymbol{\klca}$-property.}
	
	In general, $\lca(\CC(w))$ is not necessarily defined for all $w\in V(G)$,
	see e.g.\ the DAG in Fig.~\ref{fig:kc-lca}. As discussed in
	\cite{Hellmuth:22q}, it is, however, a desirable property:
	\begin{description}[noitemsep,nolistsep] 
		\item[\AX{(CL)}] For every $v\in V(G)$, $\lca(\CC(v))$ is defined.
	\end{description}
	By definition, every DAG $G$ that has the lca-property satisfies \AX{(CL)}.
	\begin{lemma}\label{L6.9}
		Let $G$ be a DAG satisfying \AX{(CL)}. Then $\CC(\lca(\CC(v)))=\CC(v)$
		for all $v\in V(G)$.
	\end{lemma}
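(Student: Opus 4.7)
The plan is to prove both inclusions separately, using only the existence of $\lca(\CC(v))$ guaranteed by \AX{(CL)} together with the already-established Lemma~\ref{L3.33} and Observation~\ref{obs:deflcaY}.

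Fix $v\in V(G)$ and set $A\coloneqq \CC(v)$ and $w\coloneqq\lca(A)$, which exists by \AX{(CL)}. For the inclusion $\CC(w)\subseteq\CC(v)$, I would observe that $v$ is trivially an ancestor of every leaf in $A=\CC(v)$, so $A\subseteq\CC(v)$ gives (with $v$ in the role of an ancestor of $A$) that Obs.~\ref{obs:deflcaY}(i) applies to yield $w=\lca(A)\preceq_G v$. Applying Lemma~\ref{L3.33} then gives $\CC(w)\subseteq\CC(v)$.

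For the reverse inclusion $\CC(v)\subseteq\CC(w)$, the key is simply the definition of $w=\lca(A)$: as an LCA, $w$ lies in $\Anc(A)$, i.e., $w$ is an ancestor of every leaf in $A$. Hence every $x\in A=\CC(v)$ satisfies $x\preceq w$, so $x\in\CC(w)$, giving $\CC(v)=A\subseteq\CC(w)$.

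Combining the two inclusions finishes the proof. There is no real obstacle here; the statement is essentially a direct unpacking of the definitions of $\lca$ and $\CC$, with Lemma~\ref{L3.33} and Obs.~\ref{obs:deflcaY}(i) supplying the one step that is not entirely tautological. The lemma's role in the subsequent development is to certify that under \AX{(CL)}, passing through $\lca$ never enlarges a cluster, so that $\CC(v)$ can always be recovered as $\CC(\lca(\CC(v)))$.
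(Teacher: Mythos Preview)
Your proof is correct and follows essentially the same approach as the paper: both use Obs.~\ref{obs:deflcaY}(i) to get $\lca(\CC(v))\preceq v$ and then Lemma~\ref{L3.33} for the inclusion $\CC(\lca(\CC(v)))\subseteq\CC(v)$, while the reverse inclusion is immediate from the definition of $\lca$ as a common ancestor (the paper simply calls this ``trivial''; you spell it out).
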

	\begin{proof}
		Since $\lca(\CC(v))$ is defined, Obs.~\ref{obs:deflcaY}(i) implies
		$\lca(\CC(v))\preceq v$. Then by Lemma~\ref{L3.33},
		$\CC(\lca(\CC(v)))\subseteq\CC(v)$. The reverse inclusion is trivial.
		\qed
	\end{proof}
	
	\begin{definition}
		Let $G$ be a DAG with leaf set $X$ and $\klca$ property. Then, $G$ has
		the \emph{strict $\klca$-property} if $G$ satisfies \AX{(CL)} and for
		every $w\in V(G)$ there is $U\in X^{(k)}$ such that
		$\lca(\CC(w))=\lca(U)$.
	\end{definition}
	
	\begin{proposition}\label{prop:klca-strict-kkr}
		Let $G$ be a DAG with leaf set $X$ and $\klca$-property. Then $G$ has the
		strict $\klca$-property if and only if $\mathscr{C}_G$ is a $k$-ary $\mathscr{T}$-system.
		In this case, $\mathscr{C}_G$ is identified by $R_G$.
	\end{proposition}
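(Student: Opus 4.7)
The plan is to observe that Prop.~\ref{prop:klca-implications} already supplies almost everything: under the $\klca$-property, $\mathscr{C}_G$ always satisfies \AX{(KS)} (by construction, since $\CC(x)=\{x\}$ for leaves $x$) and \AX{(KC)} (pre-$k$-arity), and $R_G=R_{\mathscr{C}_G}$ is a monotone $k$-ary transit function. Hence the equivalence reduces to showing that the strict $\klca$-property on $G$ is equivalent to axiom \AX{(KR)} for $\mathscr{C}_G$ with parameter $k$. The final claim that $\mathscr{C}_G$ is identified by $R_G$ will then follow from the correspondence between $k$-ary $\mathscr{T}$-systems and their canonical transit functions recalled in Section~2, together with $R_G=R_{\mathscr{C}_G}$.

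For the forward direction, take any $C\in\mathscr{C}_G$, write $C=\CC(w)$, and use the strict $\klca$-property to pick $U\in X^{(k)}$ with $\lca(\CC(w))=\lca(U)$. Set $T\coloneqq U$. Since \AX{(CL)} holds, Lemma~\ref{L6.9} gives $\CC(\lca(\CC(w)))=\CC(w)$, so $T=U\subseteq \CC(\lca(U))=\CC(w)=C$. If $T\subseteq C'=\CC(w')$ for some $C'\in\mathscr{C}_G$, then $w'$ is a common ancestor of $U$, so Obs.~\ref{obs:deflcaY}(i) yields $\lca(U)\preceq w'$, and Lemma~\ref{L3.33} gives $C=\CC(\lca(U))\subseteq\CC(w')=C'$. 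This establishes \AX{(KR)} with witness $T$.

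For the converse, fix $w\in V(G)$ and $C=\CC(w)\in\mathscr{C}_G$. Axiom \AX{(KR)} produces a non-empty $T\subseteq C$ with $|T|\le k$ such that $T\subseteq C'$ implies $C\subseteq C'$ for every $C'\in\mathscr{C}_G$. Since $T\in X^{(k)}$, the $\klca$-property guarantees that $z\coloneqq\lca(T)$ exists. Because $w$ is a common ancestor of $T$, Obs.~\ref{obs:deflcaY}(i) gives $z\preceq w$, and Lemma~\ref{L3.33} then yields $\CC(z)\subseteq C$. Applying \AX{(KR)} to $C'=\CC(z)$ (note $T\subseteq\CC(z)$) gives $C\subseteq\CC(z)$, and hence $\CC(w)=\CC(z)$. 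It remains to verify $z=\lca(\CC(w))$: $z$ is clearly a common ancestor of $\CC(w)=\CC(z)$, and any other common ancestor $z'$ of $\CC(w)$ satisfies $T\subseteq C\subseteq\CC(z')$, so minimality of $z=\lca(T)$ forces $z\preceq z'$. Thus $\lca(\CC(w))=z=\lca(T)$ with $T\in X^{(k)}$, giving both \AX{(CL)} and the witness condition of the strict $\klca$-property.

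The main obstacle lies in the converse direction: one has to upgrade the purely set-theoretic witness $T$ provided by \AX{(KR)} into a vertex-level identity $\lca(\CC(w))=\lca(T)$. The key manoeuvre is the ``sandwich'' $\CC(z)\subseteq\CC(w)\subseteq\CC(z)$, where the first inclusion comes from the $\klca$-property via Obs.~\ref{obs:deflcaY}(i) and Lemma~\ref{L3.33}, and the second from \AX{(KR)} itself. For the final sentence of the proposition, since $R_G=R_{\mathscr{C}_G}$ by Prop.~\ref{prop:klca-implications} and $\mathscr{C}_G$ is now known to be a $k$-ary $\mathscr{T}$-system, the Barth\'elemy--Changat characterization recalled in Section~2 gives $\mathscr{C}_G=\mathscr{C}_{R_{\mathscr{C}_G}}=\mathscr{C}_{R_G}$, i.e., $R_G$ identifies $\mathscr{C}_G$.
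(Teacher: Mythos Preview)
Your proof is correct and follows essentially the same approach as the paper: both reduce the statement to the equivalence of the strict $\klca$-property with axiom \AX{(KR)}, and in each direction the crux is establishing $\CC(w)=\CC(\lca(U))$ together with the $\preceq$-minimality of $\lca(U)$. The only difference is cosmetic---the paper invokes Obs.~\ref{obs:deflcaY}(ii) and (iii) at the points where you instead unfold these into Obs.~\ref{obs:deflcaY}(i), Lemma~\ref{L3.33}, and a direct minimality check.
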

	\begin{proof}
		The $\klca$-property of $G$ implies that $\mathscr{C}_G$ is pre-$k$-ary, from Prop.~\ref{prop:klca-implications}.
		First assume that $G$ has the strict $\klca$-property. 
		Consider $\CC(w)\in \mathscr{C}_G$.  By
		definition, there exists $U\in X^{(k)}$ such that
		$\lca(\CC(w))=\lca(U)$. Since $G$ satisfies \AX{(CL)}, Lemma~\ref{L6.9}
		implies $\CC(w)=\CC(\lca(\CC(w)))=\CC(\lca(U))$.  Moreover, by
		Obs.~\ref{obs:deflcaY}(ii), $\CC(\lca(U))=\CC(w)$ is the unique inclusion
		minimal cluster in $\mathscr{C}_G$ containing $U$. This implies both
		$U\subseteq \CC(w)$ and $C(w)\subseteq C(v)$ for every $v\in V(G)$ with
		$U\subseteq C(v)$. Hence, $\mathscr{C}_G$ satisfies \AX{(KR)}. Hence, $\mathscr{C}_G$ is a $k$-ary $\mathscr{T}$-system.
		
		Conversely, assume that $G$ holds $\klca$-property and $\mathscr{C}_G$
		satisfies \AX{(KR)}. Thus, for every $w\in V(G)$, there is $U\in X^{(k)}$
		such that $U\subseteq C(w)$ and $U\subseteq C(v)$ implies $C(w)\subseteq
		C(v)$ for all $v\in V(G)$. Hence, $C(w)$ is an inclusion minimal set in
		$\mathscr{C}_G$ containing $U$.  Since $G$ has the $\klca$-property,
		$\lca(U)$ is defined and, by Obs.~\ref{obs:deflcaY}(ii), $\CC(\lca(U))$
		is the unique inclusion minimal set in $\mathscr{C}_G$ containing $U$,
		and thus $\CC(w)=\CC(\lca(U))$ must hold. By Obs.~\ref{obs:deflcaY}(iii)
		we have $\lca(\CC(w))=\lca(\CC(\lca(U)))=\lca(U)$. Therefore, $G$ has the
		strict $\klca$-property.
		
		Since a set system is identified by a $k$-ary transit function if and
		only if it is a $k$-ary $\mathscr{T}$-system and its canonical transit function identifies it, we have $\mathscr{C}_G$ is identified by $R_{\mathscr{C}_G}$. Moreover, $R_{\mathscr{C}_G}=R_G$ from Prop.~\ref{prop:klca-implications}. Hence the result.
		\qed
	\end{proof}
	
	In \cite{Hellmuth:22q}, networks with the \emph{strong $\lca$-property}
	were introduced. These satisfy (i) the lca-property and (ii) for every
	non-empty subset $A\subseteq X$, there are $x,y\in A$ such that
	$\lca(x,y)=\lca(A)$. As it turns out, these networks are characterized by
	their clustering systems: $G$ is a strong $\lca$-network if and only if $G$  has the $\lca$-property and $\mathscr{C}_G$ is a weak hierarchy
	\cite[Prop.\ 13]{Hellmuth:22q}. In the following, we generalize these
	results to DAGs in general and spanning sets for $\lca(A)$ that are larger
	than a pair of points:
	\begin{definition}
		Let $G$ be DAG with leaf set $X$ and $\lca$-property. Then, $G$ has the
		\emph{strong $\klca$-property} if, for every non-empty subset $A\subseteq
		X$, there is $U\in X^{(k)}$ such that $\lca(U)=\lca(A)$.
	\end{definition}
	Fig.~\ref{fig:lca-wh} shows that the $\lca$ property does not imply
	the strong $\klca$-property, i.e., the uniqueness of LCAs for all
	$A\subseteq X$ does not imply that these are spanned by small subsets of leaves.
	\begin{lemma}
		If a DAG $G$ has the strong $\klca$-property, then it has the strict $\klca$-property.
	\end{lemma}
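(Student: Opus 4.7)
The plan is to verify, directly from the definitions, that the three defining requirements of the strict $\klca$-property all follow from the hypothesis that $G$ has the strong $\klca$-property. The proof should essentially be an unpacking of definitions, so I do not anticipate a substantive obstacle.

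First I would observe that the $\lca$-property, which is built into the definition of the strong $\klca$-property, immediately gives the $\klca$-property, since $X^{(k)}$ is contained in the family of non-empty subsets of $X$. Thus $\lca(A)$ is defined in particular for every $A\in X^{(k)}$.

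Next I would verify axiom \AX{(CL)}. For any $v\in V(G)$, the set $\CC(v)$ is non-empty: since $G$ is finite and acyclic, starting from $v$ and repeatedly moving to an out-neighbour terminates at some $\preceq$-minimal vertex, i.e.\ at a leaf, which by construction lies in $\CC(v)$. Hence the $\lca$-property guarantees that $\lca(\CC(v))$ is defined.

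Finally, for the spanning clause in the definition of the strict $\klca$-property, I would fix $w\in V(G)$, set $A\coloneqq\CC(w)$, and note that $A\subseteq X$ is non-empty by the argument above. The strong $\klca$-property applied to $A$ yields some $U\in X^{(k)}$ with $\lca(U)=\lca(A)=\lca(\CC(w))$, which is exactly the required condition. Combining these three points gives the strict $\klca$-property.
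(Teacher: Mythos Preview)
Your proof is correct and follows essentially the same route as the paper's: both unpack the three defining conditions of the strict $\klca$-property and observe they are immediate consequences of the $\lca$-property and the spanning clause built into the strong $\klca$-property. You are in fact slightly more careful than the paper in explicitly arguing that $\CC(v)\ne\emptyset$ before invoking the $\lca$-property for $\CC(v)$.
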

	\begin{proof}
		Suppose that $G$ is a DAG with leaf set $X$ and that has the strong
		$\klca$-property.  By definition, $G$ has the $\lca$-property. Hence, for
		all non-empty $A\in 2^X$, $\lca(A)$ is defined. This implies
		that $\lca(A)$ is defined for all $A\in X^{(k)}\subseteq 2^X$ and thus,
		$G$ has the $\klca$-property.  Furthermore, since $\CC(v)\in 2^X$ for all
		$v\in V(G)$, the DAG $G$ satisfies \AX{(CL)}. Let $w\in V(G)$. Since
		$A\coloneqq \CC(w)\subseteq X$ and since $G$ has the strong
		$\klca$-property, there exists $U\in X^{(k)}$ such that
		$\lca(U)=\lca(A)=\lca(\CC(w))$. In summary, $G$ has the strict
		$\klca$-property.\qed
	\end{proof}
	\begin{figure}[tb]
		\begin{minipage}{0.25\textwidth}
			\includegraphics[width=\textwidth]{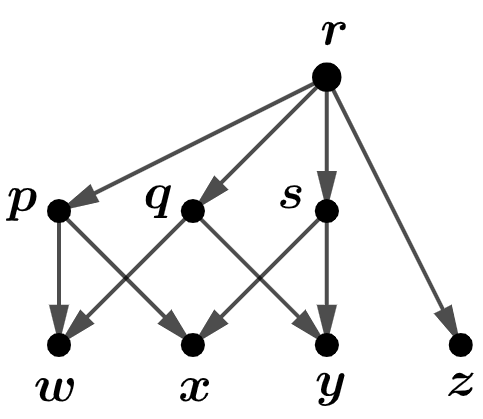}
		\end{minipage} \quad\qquad
		\begin{minipage}{0.65\textwidth}
			\caption{The network $G$ with leaf set $X=\{w,x,y,z\}$ has the
				clustering systems
				$\mathscr{C}_G=\{\{x\},\{y\},\{z\},\{w\},\{w,x\},\{w,y\},
				\{x,y\},X\}$ satisfying \AX{(KS)}, \AX{(KC)} and \AX{(KR)} for
				$k =2,3,4$. Moreover, $\lca(u,v)$ is defined for all $u,v\in
				X$. However, the sets $\{w,x\},\{w,y\}, \{x,y\}$ violates the
				condition of weak hierarchy. $G$ is an $\lca$-network but not a
				strong-2-$\lca$-network since $\lca(\{w,x,y\})=r\neq
				\lca(\{u,v\})$ for any $u,v\in \{w,x,y\}$.}
			\label{fig:lca-wh}
		\end{minipage}
	\end{figure}
	\begin{proposition}\label{prop:strong-klca}
		Let $G$ be a DAG with leaf set $X$ and $\lca$-property. Then $G$ has the
		strong $\klca$-property if and only if for every non-empty subset
		$A\subseteq X$ there exists $U\subseteq A$ with $|U|\leq k$ such that
		$\cl(A)=\cl(U)$ in $\mathscr{C}_G$.
	\end{proposition}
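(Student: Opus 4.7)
The plan is to exploit Fact~\ref{L6.22} as a dictionary between $\lca$ and $\cl$: since $G$ has the $\lca$-property, every non-empty $Y\subseteq X$ satisfies $\CC(\lca(Y))=\cl(Y)$ in $\mathscr{C}_G$. Both directions of the biconditional then reduce to translating statements about $\lca$ into statements about $\cl$ and back, with Obs.~\ref{obs:deflcaY}(iii) used to lift cluster equality back to vertex equality.

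For the forward implication, I would assume the strong $\klca$-property and fix $\emptyset\neq A\subseteq X$. By definition, there is $U\subseteq A$ with $|U|\le k$ and $\lca(U)=\lca(A)$. Applying Fact~\ref{L6.22} on both sides immediately yields
\begin{equation*}
\cl(U)=\CC(\lca(U))=\CC(\lca(A))=\cl(A),
\end{equation*}
which is exactly the required closure equality.

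For the reverse implication, I would assume the closure condition and fix $\emptyset\neq A\subseteq X$; choose $U\subseteq A$ with $|U|\le k$ and $\cl(A)=\cl(U)$. By Fact~\ref{L6.22}, $\CC(\lca(A))=\cl(A)=\cl(U)=\CC(\lca(U))$. The only subtlety is that cluster equality does not automatically promote to vertex equality in a general DAG, since two distinct vertices may share the same cluster. To bypass this, I would invoke Obs.~\ref{obs:deflcaY}(iii), which gives $\lca(\CC(\lca(Y)))=\lca(Y)$ whenever $\lca(Y)$ is defined. Applied to $A$ and $U$ in turn, this yields
\begin{equation*}
\lca(A)=\lca(\CC(\lca(A)))=\lca(\CC(\lca(U)))=\lca(U),
\end{equation*}
so $U\subseteq A$ with $|U|\le k$ witnesses the strong $\klca$-property for $A$.

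The overall argument is short; the only place where one might slip is the last step of the reverse direction, precisely because cluster equality alone is not enough. Obs.~\ref{obs:deflcaY}(iii) is the key tool that neutralises this obstacle, and once it is identified the proof becomes essentially a two-line chain of equalities in each direction.
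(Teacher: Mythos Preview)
Your argument is correct and mirrors the paper's proof almost verbatim: Fact~\ref{L6.22} translates between $\lca$ and $\cl$ in both directions, and Obs.~\ref{obs:deflcaY}(iii) lifts cluster equality back to vertex equality in the converse. The only wrinkle is that the definition of the strong $\klca$-property guarantees $U\in X^{(k)}$ rather than $U\subseteq A$, so ``by definition'' is a slight overstatement in your forward direction---but the paper's own proof glosses over the same point (writing $U\subseteq X$ there), so you are in good company.
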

	\begin{proof}
		Assume first that $G$ has the strong $\klca$-property and let
		$\emptyset\ne A\subseteq X$. Then $\lca(A)=\lca(U)$ for some $U\subseteq
		X$ with $|U|\leq k$. Applying Obs.~\ref{L6.22}, we obtain
		$\cl(A)=\CC(\lca(A))=\CC(\lca(U))=\cl(U)$.  Conversely, assume that for
		every non-empty subset $A\subseteq X$, there exists $U\subseteq A$ with
		$|U|\leq k$ such that $\cl(A)=\cl(U)$ in $\mathscr{C}_G$.  Let
		$A\subseteq X$ be non-empty. Applying Obs.~\ref{obs:deflcaY}(iii) and
		Obs.~\ref{L6.22}  yields
		$\lca(A)=\lca(\CC(\lca(A)))=\lca(\cl(A))=\lca(\cl(U))=
		\lca(\CC(\lca(U)))=\lca(U)$.  \qed
	\end{proof}
	
	Prop.~\ref{prop:strong-klca} and \ref{prop:kwH} imply
	\begin{theorem}
		\label{thm:main}
		$G$ is DAG with the strong $\klca$-property if and only if $G$ has the
		$\lca$-property and $\mathscr{C}_G$ is a $k$-weak hierarchy.
	\end{theorem}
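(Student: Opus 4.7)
The plan is to prove this theorem as a direct synthesis of the two propositions flagged in the remark, since each provides half of the bridge between the $\lca$-theoretic and the purely set-theoretic formulation. The key observation is that both propositions reduce the relevant condition to the same combinatorial statement about the closure function of $\mathscr{C}_G$: namely, that every non-empty $A\subseteq X$ admits some $U\subseteq A$ with $|U|\le k$ and $\cl(A)=\cl(U)$. So the strategy is simply to chain the two equivalences through this common middle condition.

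For the forward direction, I would assume $G$ has the strong $\klca$-property. By the definition of the strong $\klca$-property, $G$ has the $\lca$-property, so the hypotheses of Prop.~\ref{prop:strong-klca} are satisfied; its forward implication yields that for every non-empty $A\subseteq X$ there exists $U\subseteq A$ with $|U|\le k$ and $\cl(A)=\cl(U)$ in $\mathscr{C}_G$. Since $\mathscr{C}_G$ is a set system on $X$, Prop.~\ref{prop:kwH} then applies and gives that $\mathscr{C}_G$ is a $k$-weak hierarchy. Together with the $\lca$-property already noted, this is exactly the right-hand side.

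For the converse, I would assume $G$ has the $\lca$-property and $\mathscr{C}_G$ is a $k$-weak hierarchy. Apply the backward implication of Prop.~\ref{prop:kwH} to $\mathscr{C}_G$ to recover the intermediate condition: for every non-empty $A\subseteq X$ there is $U\subseteq A$ with $|U|\le k$ and $\cl(A)=\cl(U)$ in $\mathscr{C}_G$. Since $G$ has the $\lca$-property, the hypotheses of Prop.~\ref{prop:strong-klca} are met and its backward implication produces the strong $\klca$-property.

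I do not expect any real obstacle here; the only thing to be careful about is bookkeeping of hypotheses. In particular, the strong $\klca$-property is by definition layered on top of the $\lca$-property, so the $\lca$-property must be carried along on both sides of the equivalence, and Prop.~\ref{prop:strong-klca} must be invoked with this hypothesis in place. Once that is observed, the proof is a two-line chain through the common closure-theoretic criterion extracted from Prop.~\ref{prop:strong-klca} and Prop.~\ref{prop:kwH}.
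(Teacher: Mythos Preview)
Your proposal is correct and matches the paper's approach exactly: the paper states Theorem~\ref{thm:main} as an immediate consequence of Prop.~\ref{prop:strong-klca} and Prop.~\ref{prop:kwH}, and your write-up simply unpacks that chain through the common closure-theoretic condition. The bookkeeping you flag (carrying the $\lca$-property on both sides so that Prop.~\ref{prop:strong-klca} applies) is precisely the only care required.
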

	
	\section{Concluding Remarks} 
	
	The connection between clusters and LCAs in DAGs is not limited to the
	relationships discussed so far and summarized in the following diagram:
	\begin{equation*}
		\includegraphics[width=0.7\textwidth]{./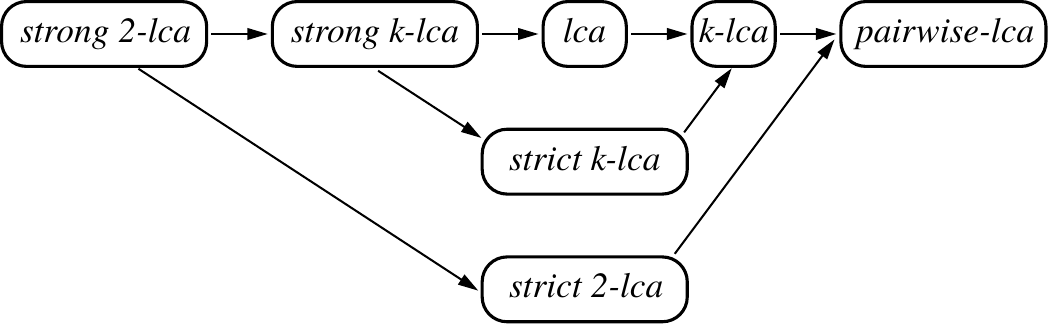}
	\end{equation*}
	Note that there are no implications between the $\lca$-, strict $\2lca$- 
	and strict $\klca$-property.
	
	\emph{Weak pyramids} are weak hierarchies that, in addition, satisfy a
	necessary (but not sufficient) condition for $\mathscr{C}$ to comprise
	intervals \cite{Nebesky:83} called \AX{(WP)} in \cite{Changat:21w}.  One
	can show, for instance, that $\mathscr{C}_G$ is weakly pyramidal for a DAG
	with the strong $2$-$\lca$-property if and only if no four distinct
	vertices $a,b,c,d\in X$ exist such that $b,c\npreceq \lca(a,d)$,
	$a,c\npreceq \lca(b,d)$, and $a,b\npreceq \lca(c,d)$. Results like this,
	which we can only mention here due to space restrictions, suggest that the
	connection between LCAs and clusters in DAGs remains an interesting topic for
	future research.
	
	\bigskip\textbf{Acknowledgments.} AVS acknowledges the financial support from the CSIR-HRDG for the Senior Research Fellowship(09/0102(12336)/2021-EMR-I).

	\bibliographystyle{splncs04}
	\bibliography{LCA}

\begin{thebibliography}{10}
\providecommand{\url}[1]{\texttt{#1}}
\providecommand{\urlprefix}{URL }
\providecommand{\doi}[1]{https://doi.org/#1}

\bibitem{Bandelt:89}
Bandelt, H.J., Dress, A.W.M.: Weak hierarchies associated with similarity
  measures --- an additive clustering technique. Bull. Math. Biol.
  \textbf{51},  133--166 (1989). \doi{10.1007/BF02458841}

\bibitem{Bandelt:94}
Bandelt, H.J., Dress, A.W.M.: An order theoretic framework for overlapping
  clustering. Discrete Math.  \textbf{136},  21--37 (1994)

\bibitem{Barthelemy:08}
Barth{\'e}lemy, J.P., Brucker, F.: Binary clustering. Discr. Appl. Math.
  \textbf{156},  1237--1250 (2008). \doi{10.1016/j.dam.2007.05.024}

\bibitem{Bender:01}
Bender, M.A., Pemmasani, G., Skiena, S., Sumazin, P.: Finding least common
  ancestors in directed acyclic graphs. In: SODA '01: Proceedings of the 12th
  Annual ACM-SIAM Symposium on Discrete Algorithms. pp. 845--853. Society for
  Industrial and Applied Mathematics, Washington, D.C., USA (2001).
  \doi{10.5555/365411.365795}

\bibitem{Bertrand:17}
Bertrand, P., Diatta, J.: Multilevel clustering models and interval
  convexities. Discr. Appl. Math.  \textbf{222},  54--66 (2017).
  \doi{10.1016/j.dam.2016.12.019}

\bibitem{Changat:10}
Changat, M., Mathews, J., Peterin, I., Narasimha-Shenoi, P.G.: $n$-ary transit
  functions in graphs. Discussiones Math. Graph Th.  \textbf{30},  671--685
  (2010)

\bibitem{Changat:19a}
Changat, M., Narasimha-Shenoi, P.G., Stadler, P.F.: Axiomatic characterization
  of transit functions of weak hierarchies. Art Discr. Appl. Math.  \textbf{2},
   P1.01 (2019). \doi{10.26493/2590-9770.1260.989}

\bibitem{Changat:21w}
Changat, M., Shanavas, A.V., Stadler, P.F.: Transit functions and clustering
  systems Submitted

\bibitem{DRESS:96}
Dress, A.: Towards a theory of holistic clustering. In: Mirkin, B., McMorris,
  F.R., Roberts, F.S., Rzhetsky, A. (eds.) Mathematical Hierarchies and
  Biology. DIMACS Series in Discrete Mathematics and Theoretical Computer
  Science, vol.~37, pp. 271--290. American Mathematical Society (1996)

\bibitem{Hellmuth:22q}
Hellmuth, M., Schaller, D., Stadler, P.F.: Clustering systems of phylogenetic
  networks. Th. Biosci.  (2023). \doi{10.1007/s12064-023-00398-w}

\bibitem{Huson:11}
Huson, D.H., Scornavacca, C.: A survey of combinatorial methods for
  phylogenetic networks. Genome Biol Evol.  \textbf{3},  23--35 (2011).
  \doi{10.1093/gbe/evq077}

\bibitem{Nakhleh:05}
Nakhleh, L., Wang, L.S.: Phylogenetic networks: Properties and relationship to
  trees and clusters. In: Priami, C., Zelikovsky, A. (eds.) Transactions on
  Computational Systems Biology {II}. Lect. Notes Comp. Sci., vol.~3680, pp.
  82--99. Springer, Berlin, Heidelberg (2005). \doi{10.1007/11567752_6}

\bibitem{Nebesky:83}
Nebesk{\'y}, L.: On a certain numbering of the vertices of a hypergraph.
  Czechoslovak Math. J.  \textbf{33}, ~1--6 (1983).
  \doi{10.21136/CMJ.1983.101849}

\end{thebibliography}
\end{document}